\newtheorem{theorem}{Theorem}[section]
\newtheorem{corollary}[theorem]{Corollary}
\newtheorem{definition}[theorem]{Definition}
\newtheorem{lemma}[theorem]{Lemma}
\newtheorem{proposition}[theorem]{Proposition}
\theoremstyle{definition}  
\newtheorem{remark}[theorem]{Remark}
\numberwithin{equation}{section}
\newif\ifdraft
\date{\today} 
\newcommand{\inv}{^{-1}}
\newcommand{\sgn}[1]{{\ssgn}(#1)}
\newcommand{\R}{\mathbb{R}}
\newcommand{\ds}{\,ds}
\newcommand{\dr}{\,dr}
\newcommand{\e}{\varepsilon}
\newcommand{\tN}{N}  
\newcommand{\bbone}{{\mathbbm 1}}
\DeclareMathOperator{\dom}{{\rm dom}}
\DeclareMathOperator{\ssgn}{{\rm sgn}}
\newcommand{\set}[1]{\{#1\}}
\title{Polynomial decay to equilibrium for the Becker-D\"{o}ring equations}
\author{Ryan W. Murray\thanks{Department of Mathematical Sciences and Center for Nonlinear Analysis, Carnegie Mellon University, Pittsburgh, PA, USA. Email: rwmurray@andrew.cmu.edu, rpego@cmu.edu .}\ \ and\ \ Robert L. Pego\footnotemark[1]}
\begin{document}
\maketitle

\let\oldlangle\langle
\let\oldrangle\rangle
\renewcommand{\langle}{\left\oldlangle}
\renewcommand{\rangle}{\right\oldrangle}

\begin{abstract}
This paper studies rates of decay to equilibrium for the  Becker-D\"{o}ring equations
with subcritical initial data. In particular, polynomial rates of decay are established when initial perturbations of equilibrium have polynomial moments. This is proved by using new dissipation estimates in polynomially weighted $\ell^1$ spaces, operator decomposition techniques from kinetic theory, and interpolation estimates from the study of travelling waves.
\end{abstract}

\noindent
{\bf{Keywords:}} Coagulation-fragmentation equations, rate of decay to equilibrium, interpolation inequalities.

\noindent
{\bf{AMS Mathematics Subject Classification:}} Primary: 34A35, 34D05. Secondary: 47D06, 82C05.

\section{Introduction}

In this work we consider the Becker-D\"oring equations, namely the following (infinite) system of differential equations
\begin{equation} \label{OriginalBDEquation}
\begin{aligned}
\frac{d}{dt} c_i(t) &= J_{i-1}(t) -J_{i}(t),\quad i=2,3,\ldots, \\
\frac{d}{dt}  c_1(t) &= -J_1(t) - \sum_{i=1}^\infty J_i(t),
\end{aligned}
\end{equation}
where the $J_i$ can be written as
\begin{equation} \label{OriginalFluxDefinition}
J_i(t) = a_ic_1(t) c_i(t) - b_{i+1} c_{i+1}(t),
\end{equation}
and where $(a_i),(b_i)$ are fixed, positive sequences, known as the coagulation and fragmentation coefficients respectively.

Becker-D\"oring systems form a subclass of the more general \emph{coagulation-fragmentation} equations. In typical physical applications the $c_i$ represent the discrete distribution function of particles of size $i$, and the evolution 
given by \eqref{OriginalBDEquation} represents the mean field approximation of the evolution of the distribution function $c_i$.  In particular, $J_i(t)$ represents the net rate 
that particles of size $i$ and size $1$ either join to form particles of size $i+1$,
or conversely are emitted by spontaneous breakup.
Thus we are primarily interested in positive solutions, whose first moment is preserved in time, meaning that
\begin{equation} \label{OriginalMassConstraint}
c_i \geq 0, \qquad \sum_{i=1}^\infty i c_i(t)  = \varrho(t) \equiv \varrho, \quad \forall t \geq 0.
\end{equation}
The Becker-D\"oring equations are used to model reactions in various physical settings, such as vapor condensation, phase separation in alloys and crystallization. This model was first proposed in \cite{BeckerDoring}, and was modified to the form we are considering in \cite{Burton},\cite{PenroseLebowitz}. A good mathematically-oriented review can be found in \cite{Slemrod}.

The well-posedness and convergence properties of the Becker-D\"oring equations have been well-studied. In particular, Ball, Carr and Penrose \cite{BallCarrPenrose}  demonstrated the existence of ``mass''-preserving, non-negative solutions to this system, namely solutions of \eqref{OriginalBDEquation} satisfying \eqref{OriginalMassConstraint}. A later work \cite{LaurencotMischlerUniqueness} established well-posedness (including uniqueness) for any initial data with finite first moment, namely the space where the ``mass'' is well-defined. Ball et al.\ \cite{BallCarrPenrose} also demonstrated that as $t \to \infty$ solutions must converge to some equilibrium $(Q_i)$, where $(Q_i)$ is uniquely determined by $\varrho$. Furthermore, they prove the existence of a value $\varrho_s$ such that if $\varrho < \varrho_s$ then the convergence to $(Q_i)$ is strong. On the other hand, if $\varrho > \varrho_s$ then there is a loss of mass to $\infty$, and the convergence is only weak. Any initial data satisfying $\varrho<\varrho_s$ is called \emph{subcritical}, while data satisfying $\varrho > \varrho_s$ is \emph{supercritical}.

The goal of this paper is to quantify the trend to equilibrium in the subcritical case ($\varrho < \varrho_s$). Specifically, our goal is to establish uniform, local rates of convergence to equilibrium in spaces with polynomial moments.

We define the detailed balance coefficients, a sequence $(\tilde Q_i)$, by the equations
\begin{equation}\label{DetailedBalance}
\tilde Q_1 = 1,\qquad
\tilde Q_i a_i = \tilde Q_{i+1}b_{i+1}, \quad  i=1,2,\ldots
\end{equation}
We note that the equilibrium solution $Q_i$ can be written as
\begin{equation}\label{QTildeDefinition}
Q_i = \tilde Q_i z^i,
\end{equation}
where the parameter $z$ is related to the mass $\varrho$ in the subcritical regime through the equation
\begin{equation}\label{zDetermined}
\sum_{i=1}^\infty i Q_i = \varrho.
\end{equation}
We note that $\varrho_s$ is linked to the radius of convergence $z_s$ of the power series with coefficients $\tilde Q_i$.

Part of our interest in studying these equations is precisely that we believe that the Becker-D\"oring equations are a suitable prototype of more general coagulation-fragmentation equations with detailed balance. Indeed, we suspect that many of the interesting phenomenon that occur for the Becker-D\"oring equations may be typical of other systems with detailed balance.

Convergence to equilibrium was proven by Ball, Carr and Penrose \cite{BallCarrPenrose} using an entropy functional. Specifically, they prove that the quantity
\[
\tilde V(c) := \sum_{i=1}^\infty c_i\left( \log\frac{c_i}{\tilde Q_i} - 1 \right)
\]
is weak-$*$ continuous and that $\tilde V(c(t))$ is strictly decreasing.

Later, Jabin and Niethammer \cite{JabinNiethammer} proved an entropy dissipation inequality which gives a uniform dissipation rate for regular data. In particular, they proved that if the initial data decays exponentially fast, then the solution converges to equilibrium with a rate bounded by $e^{-Ct^{1/3}}$ in the mass-weighted space.

In a recent work,  Ca\~{n}izo and Lods \cite{CanizoLods} improved this bound to $e^{-Ct}$. They do so by observing  that the Becker-D\"oring equations \eqref{OriginalBDEquation} have a type of symmetric structure, which we make use of below. In particular, if we write the Becker-D\"oring equations in terms of a perturbation of the equilibrium solution
\begin{equation}\label{hDefinition}
c_i = Q_i (1 + h_i),
\end{equation}
then we may express the original equation \eqref{OriginalBDEquation} in the form
\begin{equation}\label{BDEq}
\frac{d}{dt} h = F(h_1(t)) h,
\end{equation}
and the mass constraint \eqref{OriginalMassConstraint} as
\begin{equation} \label{hMassConstraint}
0 = \sum_{i=1}^\infty Q_i i h_i .
\end{equation}
We note that 
\begin{equation}\label{Def:F}
F(g) = L + g\Gamma,
\end{equation}
 where $L$ and $\Gamma$ are both linear operators. Ca\~{n}izo and Lods rewrote the operator $F(g)$ in weak form, satisfying
\begin{equation}
\sum_{i=1}^\infty Q_i(F (g)h)_i \phi_i  = \sum_{i=1}^\infty a_iQ_i Q_1 (h_1 + h_i - h_{i+1} + gh_i)(\phi_{i+1} - \phi_i - \phi_1)
\end{equation}
for all $(\phi_i)$ in a suitable space of test sequences.
They then linearized the equation and consider the operator $L$, which is given in weak form by
\begin{equation}\label{Def:L}
\sum_{i=1}^\infty Q_i(L h)_i \phi_i  = \sum_{i=1}^\infty a_iQ_i Q_1 (h_1 + h_i - h_{i+1} )(\phi_{i+1} - \phi_i - \phi_1).
\end{equation}
If we consider an $\ell^2$ space weighted by $Q_i$ then this form is clearly symmetric. Additionally, if $(c_i)$ is a solution of \eqref{OriginalBDEquation} and $(h_i)$ is determined by \eqref{hDefinition} then we have that $h_i \in [-1,\infty)$ and that $\sum Q_i i h_i = 0$. We thus define the Hilbert space $H$ by
\begin{equation}\label{Def:HSpace}
H := \left\{(h_i) : \|h\|_{\ell^2(Q_i)} := \left(\sum_{i=1}^\infty Q_i h_i^2 \right)^{1/2} < \infty,\quad \sum Q_i i h_i = 0\right\}.
\end{equation}
with the natural induced norm $\|\cdot \|_H = \|\cdot\|_{\ell^2(Q_i)} $ and inner product $\langle \cdot, \cdot \rangle_H$.
Ca\~{n}izo and Lods demonstrated that the linear part ($L$) of the Becker-D\"oring equations has a good spectral gap in $H$, or precisely that for some constant $\lambda_c > 0$ the following holds, independent of $h$:
\begin{equation} \label{Eqn:L2Gap}
\langle h,L h \rangle_H = -\sum_{i=1}^\infty a_iQ_i Q_1 (h_1 + h_i - h_{i+1})^2 \leq -\lambda_c \langle h,h \rangle_H.
\end{equation}
A key point is that the mass constraint \eqref{hMassConstraint} precludes the null vector $h_i = i$. Detailed quantitative estimates of $\lambda_c$ can then be obtained using Hardy's inequality---see \cite{CanizoLods} for details.

Ca\~{n}izo and Lods then utilized a priori bounds from \cite{JabinNiethammer} to control the non-linear term and establish a rate of convergence to equilibrium. More precisely, defining the Banach space
\[
X_\eta := \left\{ (h_i) : \|h\|_{\ell^1(Q_i e^{\eta i})} := \sum_{i=1}^\infty Q_i e^{\eta i} |h_i| < \infty,\quad  \sum Q_i i h_i = 0\right\},\quad 0 < \eta < 1,
\]
with the induced norm $\|\cdot \|_{X_\eta}=\|\cdot \|_{\ell^1(Q_i e^{\eta i})}$, they prove that for $0 < \eta < \bar \eta$, given initial data in $X_{\bar \eta}$ then the solution must converge at a uniform exponential rate in $X_\eta$. A key technical aspect of their proof was an operator decomposition technique from \cite{MouhotMischler}, which permits an extension of the spectral gap of $L$ from $H$ to $X_\eta$. We recall  (see  \cite{CanizoLods}) that the space $H$ is continuously embedded in $X_\eta$ for $\eta>0$ sufficiently small, precisely because the $Q_i$ are exponentially decaying.

Our aim in the present paper is to analyze the trend to equilibrium for a wider class of initial data, for which the a priori bounds from \cite{JabinNiethammer} are not available. We define the Banach spaces
\begin{equation}
X_k := \left\{ (h_i) : \|h\|_{\ell^1(Q_i i^k)} := \sum_{i=1}^\infty Q_i i^k |h_i| < \infty,\quad  \sum Q_i i h_i = 0\right\},\quad k\geq 1,\\
\end{equation}
with norm $\|\cdot\|_{X_k}=\|\cdot\|_{\ell^1(Q_i i^k)}$. The main result of our paper is as follows.

\begin{theorem} \label{Thm:AlgebraicDecay}
Let $(h_i(t))$  defined by \eqref{hDefinition} represent the deviation from equilibrium of a solution $(c_i(t))$ to the Becker-D\"oring equations (see Definition \ref{Def:Solution}). Assume that the model coefficients in \eqref{OriginalFluxDefinition} satisfy \eqref{aLowerBound}-\eqref{boundedByI} below. Let  $m$ and $k$ be real numbers satisfying $m> 0$ and $k>m+2$. Then there exists positive constants $\delta_{k,m}, C_{k,m}$ so that if $\|h(0)\|_{X_{1+k}} < \delta_{k,m}$ then we have that
\begin{equation}
\|h(t)\|_{X_{1+m}} \leq C_{k,m}(1+t)^{-(k-m-1)}\|h(0)\|_{X_{1+k}} \quad \mbox{for all $t \geq 0$}.
\end{equation}
\end{theorem}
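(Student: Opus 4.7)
The strategy has three parts: uniform-in-time propagation of the high moment $X_{1+k}$, a dissipation inequality in $X_{1+m}$ built from the $H$-spectral gap of $L$ by an operator splitting, and closure of a scalar subgeometric differential inequality via an interpolation estimate in the spirit of the travelling-wave literature. \textbf{Step 1 (propagation of the high moment).} I would first show that $\|h(t)\|_{X_{1+k}}$ remains bounded by a constant multiple of the initial norm provided $\|h(0)\|_{X_{1+k}} < \delta_{k,m}$ is small. This comes from differentiating the Lyapunov functional $\sum Q_i i^{1+k}|h_i|$ along \eqref{BDEq}, using the detailed-balance relation \eqref{DetailedBalance} to re-express the fluxes in the $h$ variables, and obtaining an inequality of the form
\begin{equation*}
\frac{d}{dt}\|h(t)\|_{X_{1+k}} \le K\bigl(1 + \|h(t)\|_{X_{1+m}}\bigr)\|h(t)\|_{X_{1+k}},
\end{equation*}
which closes by continuity once the $X_{1+m}$ decay from Step 3 is simultaneously in place. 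This step substitutes for the exponential-moment a priori bound from \cite{JabinNiethammer} used in \cite{CanizoLods}, now in a purely polynomial setting.

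\textbf{Step 2 (dissipation via operator splitting).} In parallel, I would derive a dissipation inequality of the form
\begin{equation*}
\frac{d}{dt}\|h(t)\|_{X_{1+m}} + D[h(t)] \le \mathcal{N}[h(t)],
\end{equation*}
with $D[h]\ge 0$ coercive and $\mathcal{N}[h]$ a nonlinear remainder of order $|h_1(t)|\,\|h(t)\|_{X_{1+m}}$. Coercivity of $D[h]$ is produced via the Mouhot--Mischler splitting used in \cite{CanizoLods}: write $L = A + B$ with $A$ a bounded truncation supported on indices $i \le N$ (so $Ah \in H$ and the $H$-spectral gap \eqref{Eqn:L2Gap} applies to the truncated part) and $B$ hypodissipative on $X_{1+m}$ through the fragmentation drift. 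Iterating the Duhamel formula for $B$ and invoking the $H$-gap on the truncated part then produces the dissipation functional $D[h]$.

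\textbf{Step 3 (interpolation and algebraic decay).} The decisive analytic input is an interpolation inequality, analogous to those used in the travelling-wave analyses of coagulation-fragmentation, of the form
\begin{equation*}
\|h\|_{X_{1+m}}^{(k-m)/(k-m-1)} \le C\, D[h]\, \|h\|_{X_{1+k}}^{1/(k-m-1)}.
\end{equation*}
Substituting the uniform bound from Step 1 and absorbing $\mathcal{N}[h]$ into $\tfrac{1}{2}D[h]$ (using the smallness of $\delta_{k,m}$) leads to the scalar subgeometric differential inequality
\begin{equation*}
\frac{d}{dt}\|h(t)\|_{X_{1+m}} \le -c\,\|h(t)\|_{X_{1+m}}^{(k-m)/(k-m-1)},
\end{equation*}
whose solutions decay at rate $(1+t)^{-(k-m-1)}$, which is precisely the claimed bound.

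\textbf{Main obstacle.} The most delicate step is the interpolation of Step 3: it must convert the essentially $\ell^2$-type coercivity underlying $D[h]$, inherited from \eqref{Eqn:L2Gap}, into an estimate compatible with the $\ell^1$ norms $\|h\|_{X_{1+m}}$ and $\|h\|_{X_{1+k}}$, through a splitting of $h$ into low- and high-index parts with cutoff tuned to $k-m$. Simultaneously closing the bootstrap between Steps 1 and 3 without degrading the sharp exponent $(k-m-1)$, while uniformly controlling $|h_1(t)|$ in the nonlinear correction, is the principal technical challenge.
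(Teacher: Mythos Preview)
Your route is genuinely different from the paper's, and as written it has real gaps.

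\textbf{How the paper proceeds.} The paper does \emph{not} use a differential inequality for $\|h\|_{X_{1+m}}$ or a Nash-type functional interpolation. Instead it works entirely at the semigroup level: (i) it proves $\|e^{Lt}\|_{X_{1+k}\to X_{1+m}}\le C(1+t)^{-(k-m)}$ by a K-functional interpolation (Theorem~\ref{SemigroupInterpolation}) between the uniform bound $\|e^{Lt}\|_{\mathcal L(X_1)}\le M$ (Theorem~\ref{Cor:LStableSemigroup}) and the exponential decay $\|e^{Lt}\|_{\mathcal L(X_\eta)}\le Me^{-\lambda_\eta t}$ (Proposition~\ref{Prop:CanizoLSemigroup}); (ii) it proves uniform-in-time boundedness of $\|h(t)\|_{X_{1+k}}$ (Theorem~\ref{Lem:LyapunovStability}) not by a moment differential inequality but by showing that $F(h_1(t))$ generates a \emph{bounded evolution family} on $X_{1+k}$, via dissipativity of $A(g)$ and the extension principle (Proposition~\ref{PropMischlerMouhot}); (iii) it plugs both into Duhamel and closes with an integral Gronwall lemma (Lemma~\ref{Lemma:Gronwall}).

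\textbf{Where your proposal breaks.} First, Step~1 as stated cannot give uniform-in-time control: the inequality $\frac{d}{dt}\|h\|_{X_{1+k}}\le K(1+\|h\|_{X_{1+m}})\|h\|_{X_{1+k}}$ has the constant term $K$, so Gronwall yields a factor $e^{Kt}$ regardless of how fast $\|h\|_{X_{1+m}}$ decays. This is not a typo-level issue: if you differentiate $\|h\|_{X_{1+k}}$, the $A$-part is indeed $\le 0$ (Proposition~\ref{Prop:L1Dissipation}), but the $B$-part contributes a bounded linear term $\sim C\|h\|_{X_1}\le C\|h\|_{X_{1+k}}$ that you cannot absorb. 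The paper circumvents this precisely by not staying at the differential level: it routes the $B$-contribution through $H$, where there \emph{is} exponential decay, via Duhamel (this is the content of the extension principle). Second, Step~2 conflates two mechanisms: ``iterating Duhamel for $B$ and invoking the $H$-gap'' is an integral-in-time argument and does not produce a pointwise-in-time dissipation functional $D[h]$ that you can insert into $\frac{d}{dt}\|h\|_{X_{1+m}}+D[h]\le\mathcal N[h]$. The coercivity you need on low indices lives in $H$ (an $\ell^2$ space), not in $X_{1+m}$; there is no obvious candidate for $D[h]$ that is simultaneously (a) a lower bound for the actual dissipation of $\|h\|_{X_{1+m}}$ and (b) satisfies your Nash inequality. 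You flag this yourself as the main obstacle, but it is not merely delicate---without a concrete $D[h]$ the scheme does not start.

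If you want to salvage a differential-inequality approach, the honest replacement for Step~1 is the paper's Theorem~\ref{Lem:LyapunovStability}, and the honest replacement for Steps~2--3 is to prove the semigroup decay $\|e^{Lt}\|_{X_{1+k}\to X_{1+m}}\le C(1+t)^{-(k-m)}$ first and then use Duhamel plus an integral Gronwall, exactly as the paper does.
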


In order to obtain this result, we establish detailed estimates on the semigroup generated by $L$ in the spaces $X_k$ by using new dissipation estimates, together with the spectral gap estimate \eqref{Eqn:L2Gap}, the operator decomposition result from \cite{MouhotMischler} and interpolation techniques from Engler's work on travelling wave stability \cite{Engler}. We then prove a local stability result in $X_k$ (see Theorem \ref{Lem:LyapunovStability}), which along with Duhamel's formula proves the desired result.



\subsection{Assumptions and Preliminaries}\label{Sec:Assumptions}

We impose the following assumptions on our model coefficients:
\begin{align}
a_i &>C_1>0\qquad \forall i \geq 1, \label{aLowerBound}\\
\lim_{i \to \infty} \frac{a_{i+1}}{a_i} &= 1, \label{aLimit}\\
\lim_{i \to \infty} \frac{a_i}{b_{i}} &=: \frac{1}{z_s}\in(0,\infty) \label{tildeQLimit}\\
 a_i, b_i &\leq C_2 i   \qquad \forall i\geq 1, \label{boundedByI}\\
\end{align}
with $a_i,b_i$ as in \eqref{OriginalBDEquation} and \eqref{OriginalFluxDefinition}, and where $C_1,C_2$ are fixed constants, independent of $i$.

Following \cite{BallCarrPenrose}, we define a solution to the Becker-D\"oring equations in the following way
\begin{definition}\label{Def:Solution}
    A function $(c_i(t))$ is a solution to the Becker-D\"oring equations on $[0,T)$ if
\begin{enumerate}
\item $\sum_{i=1}^\infty i |c_i| < \infty$ for all $t \in [0,T)$.
\item For all $i$ we have that  $c_i(t)$ is continuous in time, and non-negative.
\item The following (well-defined) equations are satisfied
\end{enumerate}
\begin{align}
c_i(t) &= c_i(0) + \int_0^t \left( J_{i-1}(s) - J_i(s)\right) \ds,\qquad i \geq 2,\\
c_1(t) &= c_1(0) - \int_0^t \left(J_1(s) + \sum_{i=1}^\infty J_i(s)\right) \ds.
\end{align}
\end{definition}

Throughout the paper we will be considering solutions $(c_i(t))$ of the Becker-D\"oring equations \eqref{OriginalBDEquation} with some fixed, subcritical mass, meaning that for some $z < z_s$, we have that the $Q_i$ defined by \eqref{QTildeDefinition} will satisfy
\[
\sum_{i=1}^\infty Q_i i = \varrho = \sum_{i=1}^\infty i c_i(t).
\]
Using  \eqref{DetailedBalance}, \eqref{QTildeDefinition} and \eqref{tildeQLimit}, it is immediate that
\begin{equation}
\lim_{i \to \infty} \frac{Q_{i+1}}{Q_i} = \frac{z}{z_s}  < 1. \label{QLimit}
\end{equation}
This naturally implies that the $Q_i$ are exponentially decaying.

Also, by combining \eqref{aLimit} and \eqref{tildeQLimit}, we observe that
\begin{equation}
a_i(z + \delta)  = a_i(Q_1 + \delta) \leq b_i, \quad \forall i > N_z , \label{asymptoticallyStrongFragmentation}
\end{equation}
for some $\delta>0$ and $N_z$ that are fixed and independent of $i$, but possibly dependent on $z$. All of these assumptions are fairly standard, and versions of them can be found in \cite{BallCarrConvergence,CanizoLods,JabinNiethammer}. In particular we note the similarity of \eqref{asymptoticallyStrongFragmentation} with the assumption given in \cite{BallCarrConvergence}. In that work Ball and Carr make the assumption that
\begin{equation}
a_i z \leq b_i
\end{equation}
for $i>\hat N$, and for all $z<z_s$. In that work, this assumption was made in order to guarantee that $V(c(t_n))$ converges to the minimum value of $V$, where $V$ is a suitable entropy functional. In their work, coefficients were required to be $O(i/\log(i))$, but this was subsequently relaxed in \cite{SlemrodEquilibrium}. These assumptions were also utilized in \cite{CanizoLods} and \cite{JabinNiethammer}.

%
%
%
%


One of the primary advantages to our method is that it lays bare the mechanism causing convergence to equilibrium. Inequality \eqref{asymptoticallyStrongFragmentation} arises naturally in attempting to establish dissipation estimates, thus motivating the analytical need for such assumptions. More importantly, \eqref{asymptoticallyStrongFragmentation} is satisfied by many of the relevant physical models. For example, one physically-motivated form of the model coefficients is (see \cite{Penrose89})
\begin{equation}\label{PenroseCoefficients}
a_i = i^\alpha,\quad b_i = a_i\left(z_s + \frac{q}{i^{1-\mu}}\right),\quad \alpha \in (0,1],\quad \mu \in [0,1],\quad q > 0.
\end{equation}
For this model we have
\begin{equation}
b_i - Q_1 a_i  \geq (z_s-z)a_i,
\end{equation}
which naturally implies that assumption \eqref{asymptoticallyStrongFragmentation} is only satisfied in the subcritical setting.

\section{Linearized stability estimates In $X_1$}\label{Sec:X1Linear}

In this section we establish stability estimates for the semigroup generated by the operator $L$, in the space $X_1$.  Following \cite{CanizoLods}, our goal is to use an operator decomposition technique to derive uniform bounds on $e^{Lt}$ in $X_1$. This technique was first developed by Gualdani, Mischler and Mouhot \cite{MouhotMischler} to study the Boltzmann equation, and was previously applied to the Becker-D\"oring equations by Canizo and Lods \cite{CanizoLods}. Here we generalize the technique to the case of evolution families.

  We remark that the symbols $M$ and $C$, with various subscripts, will represent generic constants whose value may sometimes vary line by line.
The notation ${\mathcal L}(Y,Z)$ denotes the space of bounded linear operators from $Y$
to $Z$, and ${\mathcal L}(Y)={\mathcal L}(Y,Y)$.
In this paper, the term ``semigroup'' always refers to a strongly continuous semigroup
of linear operators.

\begin{proposition}[Extension Principle] \label{PropMischlerMouhot}
Let $Z \subset Y$ be Banach spaces, with $Z$ continuously embedded into $Y$. Let $I=[0,T)$ with $T = \infty$ permitted, and let $\{A(t)\}_{t \in I}$ and $\{B(t)\}_{t \in I}$ be families of linear operators on $Y$. Suppose that
\begin{enumerate}
    \item $\set{A(t)+B(t)}_{t\in I}$ generates an evolution family $U^Z$ on $Z$, satisfying 
\[
    \|U^Z(t,s)\|_{\mathcal{L}(Z)} \leq M_Ze^{-\lambda_Z (t-s)} 
    \qquad\text{ for } 0\le s\le t<T, 
\]a
for some $\lambda_Z \in\R$.
\item $B(t)$ is ``regularizing,'' meaning that $B(\cdot) \in C(I;\mathcal{L}(Y,Z))$, and that $\|B(t)\|_{\mathcal{L}(Y,Z)}< M_B$, uniformly for $t\in I$.
\item $\set{A(t)}_{t\in I}$ generates an evolution family $V$ on $Y$, satisfying 
\[
\|V(t,s)\|_{\mathcal{L}(Y)} \leq M_V e^{-\lambda_Y (t-s)}
    \qquad\text{ for } 0\le s\le t<T, 
\]
with $\lambda_Z > \lambda_Y $.
\end{enumerate}
Then $\set{A(t)+B(t)}_{t\in I}$ generates an evolution family $U^Y$ on $Y$ with bound
\begin{equation} \label{Eqn:MMBound}
\|U^Y(t,s)\|_{\mathcal{L}(Y)} \leq M_Ye^{-\lambda_Y (t-s)}
    \qquad\text{ for } 0\le s\le t<T.
\end{equation}
\end{proposition}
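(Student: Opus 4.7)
The plan is to exploit the regularizing property of $B$ to upgrade the decay of $U^Y$ from what one would naively obtain from $V$ alone. The central mechanism is a Duhamel formula expressing $U^Y$ as $V$ plus an integral term in which the input to $U^Y(t,r)$ is forced into $Z$ by $B$, so that the faster decay $\lambda_Z$ of $U^Z$ can be invoked.

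First, I would establish existence of $U^Y$ on $Y$. Since $Z \hookrightarrow Y$ continuously, $\{B(t)\}$ is a norm-continuous, uniformly bounded family of operators on $Y$. A bounded-perturbation argument for evolution families---for instance, a Dyson--Phillips expansion
\begin{equation}
U^Y(t,s) = \sum_{n=0}^\infty W_n(t,s), \quad W_0=V, \quad W_{n+1}(t,s) = \int_s^t V(t,r) B(r) W_n(r,s)\,dr,
\end{equation}
converging absolutely in $\mathcal{L}(Y)$ on bounded subintervals of $I$---produces a strongly continuous evolution family on $Y$ generated by $\{A(t)+B(t)\}$. By uniqueness of the generated family, its restriction to $Z$ must coincide with $U^Z$.

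Next comes the key identity. Differentiating $r \mapsto U^Y(t,r) V(r,s)$ in $r$ and integrating from $s$ to $t$ yields
\begin{equation}
U^Y(t,s) = V(t,s) + \int_s^t U^Y(t,r) B(r) V(r,s)\,dr.
\end{equation}
For any $y \in Y$ one has $V(r,s)y \in Y$ and then $B(r)V(r,s)y \in Z$; since $U^Y$ agrees with $U^Z$ on $Z$, the integrand may be rewritten with $U^Z(t,r)$ in place of $U^Y(t,r)$:
\begin{equation}
U^Y(t,s) = V(t,s) + \int_s^t U^Z(t,r) B(r) V(r,s)\,dr.
\end{equation}
Taking $\mathcal{L}(Y)$ norms and applying the three hypothesized bounds gives
\begin{equation}
\|U^Y(t,s)\|_{\mathcal{L}(Y)} \le M_V e^{-\lambda_Y(t-s)} + M_Z M_B M_V \int_s^t e^{-\lambda_Z(t-r) - \lambda_Y(r-s)}\,dr.
\end{equation}
The integral equals $e^{-\lambda_Y(t-s)} \int_s^t e^{-(\lambda_Z-\lambda_Y)(t-r)}\,dr \le (\lambda_Z-\lambda_Y)^{-1} e^{-\lambda_Y(t-s)}$ by the hypothesis $\lambda_Z > \lambda_Y$, establishing \eqref{Eqn:MMBound} with $M_Y = M_V\bigl(1 + M_Z M_B / (\lambda_Z - \lambda_Y)\bigr)$.

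The main obstacle is the first step, where some care is needed beyond the original Gualdani--Mischler--Mouhot semigroup setting: one must verify that the Dyson--Phillips series defines a genuine strongly continuous evolution family satisfying the cocycle identity $U^Y(t,r)U^Y(r,s) = U^Y(t,s)$, and that this family is the unique one generated by $\{A(t)+B(t)\}$---which is precisely what legitimizes the identification $U^Y|_Z = U^Z$ used above. Once these generation and uniqueness issues are settled, the decay estimate itself is a short and self-contained Duhamel computation.
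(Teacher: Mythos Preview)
Your proposal is correct and follows essentially the same approach as the paper: existence of $U^Y$ via bounded perturbation of $V$, the Duhamel identity $U^Y(t,s)=V(t,s)+\int_s^t U^Y(t,r)B(r)V(r,s)\,dr$, replacement of $U^Y$ by $U^Z$ inside the integral using the regularizing property of $B$, and the elementary integral bound using $\lambda_Z>\lambda_Y$. The paper cites Pazy (Theorem~5.2.3 and Lemma~5.4.5) in place of your Dyson--Phillips construction, and you are somewhat more explicit about the uniqueness step needed to identify $U^Y|_Z=U^Z$, but the argument is otherwise the same.
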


\begin{proof}
This proof is almost identical to that of Theorem 3.1 in \cite{CanizoLods}, with the necessary changes to the setting of evolution families. We give the proof for clarity.

First, it is evident that $\set{A(t) + B(t)}_{t\in I}$ generates an evolution family since $B(t)$ is bounded and continuous in $t$ (see, for example, Theorem 5.2.3 in \cite{PazyBook}). Thus our goal is to prove \eqref{Eqn:MMBound}.

Using Duhamel's formula we can write the evolution family generated by $A(t)+B(t)$ as follows:
\begin{equation}
U^Y(t,s) h(s) = V(t,s) h(s) + \int_s^t U^Y(t,r) (B(r)V(r,s) h(s)) \dr
\end{equation}

This formula can be rigorously justified in the current setting by applying Lemma 5.4.5 in \cite{PazyBook}. We then estimate
\begin{equation}
\|U^Y(t,s) h(s)\|_Y \leq M_Ve^{-\lambda_Y (t-s)}\|h(s)\|_Y + 
\int_s^t \|U^Y(t,r)B(r)V(r,s) h(s)\|_Y \dr.
\end{equation}


As $B$ maps from $Y$ to $Z$ we can replace $U^Y$ with $U^Z$ inside the integral, and then estimate using the decay estimate in $Z$ to infer
\begin{equation}
\|U(t,s)^Y h(s)\|_Y \leq M_Ve^{-\lambda_Y(t-s)}\|h(s)\|_Y + \int_s^t M_Z e^{-\lambda_Z(t-r)} \|B(r)V(r,s) h(s)\|_Z \dr.
\end{equation}
Using our bounds on $B$ and $V$ we obtain
\begin{align}
\|U^Y(t,s) h(s)\|_Y &\leq M_Ve^{-\lambda_Y(t-s)} \|h(s)\|_Y  + \|h(s)\|_Y M_VM_ZM_Be^{-\lambda_Y(t-s)} \int_s^t e^{-(\lambda_Z - \lambda_Y)(t-r)} \dr \\
&\leq M_Y e^{-\lambda_Y (t-s)} \|h(s)\|_Y,
\end{align}
which is the desired result.
\end{proof}

\begin{remark}
When $A$ and $B$ are constant in time this reduces to a statement about semigroups, and indeed in that case the statement and proof are found in \cite{CanizoLods}. In this section we only use the proposition to prove bounds on the semigroup $e^{Lt}$, but in Section \ref{Sec:NonLinear} we will use it in the case of evolution families.
\end{remark}

We emphasize that the previous result is valid when $\lambda_Y = 0$, meaning that the result applies to semigroups that are only stable.

Next, recall that the operator $L$ is determined by the weak form \eqref{Def:L}. 
We write
\begin{equation}
L=A+B \,,
\end{equation}
with the operator $A$ determined 
via the weak form
\begin{equation} \label{def:AOperator}
\sum_{i=1}^\infty Q_i (Ah)_i \phi_i := \sum_{i = \tN}^\infty Q_i Q_1 a_i (h_i - h_{i+1})(\phi_{i+1}-\phi_i-\phi_1) - Q_{\tN-1} Q_1a_{\tN-1} h_{\tN}(\phi_{\tN} - \phi_{\tN-1} - \phi_1),
\end{equation}
where we fix some $\tN \geq N_z +1$, with $N_z$ given in \eqref{asymptoticallyStrongFragmentation}. We take the domain of definition for both $A$ and $L$ initially to be the set of sequences with finite support that satisfy \eqref{hMassConstraint}, namely having zero ``mass''. We note that if we set $\phi_i = i$ we get zero, implying that $A$ and $B$ both map into the space of sequences with zero mass.

We first give an elementary bound on $L$ and $\Gamma$, which indicates a minimal size for the domain of the closure of these operators. We will subsequently show that $B$ is bounded, which in turn means that this also gives information about the domain of the closure of $A$.

\begin{lemma}\label{Lemma:GammaEstimate}
For any $m\geq 0$, and for some constant $C_m$ the following bound holds
\[
\|\Gamma h\|_{X_{1+m}} \leq C_m \|h\|_{X_{2+m}}\qquad \|Lh\|_{X_{1+m}} \leq C_m \|h\|_{X_{2+m}}.
\]
\end{lemma}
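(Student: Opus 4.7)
The plan is to compute both norms directly using the weak forms of $L$ and $\Gamma$, choosing a dual sequence $\phi$ adapted to the $X_{1+m}$ norm. Specifically, $\|Lh\|_{X_{1+m}}=\sum_i Q_i i^{1+m}|(Lh)_i|$, so setting
\[
\phi_i = i^{1+m}\operatorname{sgn}((Lh)_i)
\]
gives $\sum_i Q_i (Lh)_i \phi_i = \|Lh\|_{X_{1+m}}$. Plugging into the weak form \eqref{Def:L} yields
\[
\|Lh\|_{X_{1+m}} = \sum_{i=1}^\infty a_i Q_i Q_1 (h_1 + h_i - h_{i+1})(\phi_{i+1}-\phi_i-\phi_1),
\]
and the analogous identity (with $h_1+h_i-h_{i+1}$ replaced by $h_i$) holds for $\Gamma$. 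A brief truncation argument (working on finitely supported data and test sequences, then passing to the limit) justifies using such unbounded $\phi$.

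The core estimate is the elementary bound
\[
|\phi_{i+1}-\phi_i-\phi_1| \le (i+1)^{1+m}+i^{1+m}+1 \le C_m i^{1+m}
\]
for all $i\ge 1$, since $(i+1)^{1+m}\le 2^{1+m}i^{1+m}$. Combined with assumption \eqref{boundedByI}, which gives $a_i\le C_2 i$, one has the pointwise factor $a_i|\phi_{i+1}-\phi_i-\phi_1| \le C_m C_2\, i^{2+m}$. Using this in the $\Gamma$ identity gives
\[
\|\Gamma h\|_{X_{1+m}} \le C_m C_2 Q_1 \sum_{i=1}^\infty Q_i i^{2+m}|h_i| = C_m'\|h\|_{X_{2+m}},
\]
which is the first assertion.

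For $L$ the same bound controls the $h_i$ term directly. The $h_1$ contribution factors as $Q_1|h_1|\sum_i a_i Q_i(i^{1+m}+(i+1)^{1+m}+1)$, and since $Q_i$ decays exponentially by \eqref{QLimit} and $a_i\le C_2 i$, this sum converges to some finite constant $K_m$; combined with $Q_1|h_1|\le \|h\|_{X_{2+m}}$ (from $Q_1\cdot 1\cdot|h_1|$ being a single term of the defining series, up to rescaling), this term is also bounded by $C_m\|h\|_{X_{2+m}}$. For the $h_{i+1}$ term we use \eqref{QLimit} in the form $Q_i \le C Q_{i+1}$ (for all $i$, enlarging $C$ to handle the finitely many small indices) and shift the summation index $j=i+1$:
\[
\sum_i a_i Q_i |h_{i+1}| i^{1+m} \le C\sum_j a_{j-1} Q_j |h_j| (j-1)^{1+m} \le CC_2 \|h\|_{X_{2+m}}.
\]
Summing the three contributions gives the bound on $\|Lh\|_{X_{1+m}}$.

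The only mildly subtle point is the last shift-of-index argument for the $h_{i+1}$ term and the justification of using the unbounded dual sequence $\phi_i=i^{1+m}\operatorname{sgn}(\cdot)$; both are routine, handled respectively by the exponential decay of $Q_i$ and by approximation on finitely supported data in the domain of $L$ and $\Gamma$.
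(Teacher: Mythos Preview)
Your proof is correct and follows essentially the same route as the paper: choose the dual sequence $\phi_i = i^{1+m}\operatorname{sgn}((Lh)_i)$, bound $|\phi_{i+1}-\phi_i-\phi_1|$ by a constant times $i^{1+m}$, and use $a_i\le C_2 i$. The only cosmetic difference is in the $h_{i+1}$ term: the paper reindexes using the detailed balance identity $a_{i-1}Q_{i-1}Q_1 = Q_i b_i$ to produce the factor $b_i$ (and then invokes $b_i\le C_2 i$), whereas you reindex via the ratio bound $Q_i\le C Q_{i+1}$ from \eqref{QLimit}; both achieve the same estimate.
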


\begin{proof}
We only show the estimate for $L$, as the estimate for $\Gamma$ is essentially identical. We simply estimate
\begin{align}
\|Lh\|_{X_{1+m}} &= \sum_{i=0}^\infty Q_i (L h)_i i^{1+m} \ssgn(L h)_i \\
&\leq \sum_{i=1}^\infty Q_i(a_i Q_1 + b_i) |h_i| 3(i+1)^{1+m}  + 3|h_1|\sum_{i=1}^\infty Q_iQ_1 a_i (i+1)^{1+m}\\
&\leq C \sum_{i=1}^\infty Q_i i^{2+m}  |h_i|,
\end{align}
where we have used \eqref{boundedByI}. This proves the lemma.
\end{proof}

In order to use the extension principle, Proposition \ref{PropMischlerMouhot}, we first prove that $B$ is ``regularizing.''
(Recall $H\subset X_1$.) 

\begin{lemma}\label{Lemma:BRegularizing}
The operator $B$ is a bounded operator from $X_1$ to $H$.
\end{lemma}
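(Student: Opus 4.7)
The plan is to exploit the fact that $B = L - A$ has a very restricted structure: the infinite "tail" of $B$ depends only on the single component $h_1$, while all other index-couplings occur only among finitely many indices $i \le N$. Concretely, subtracting the weak form of $A$ from that of $L$ gives
\begin{equation}
\sum_i Q_i (Bh)_i \phi_i = \sum_{i=1}^\infty a_i Q_i Q_1 h_1 (\phi_{i+1}-\phi_i-\phi_1) + \sum_{i=1}^{\tN-1} a_i Q_i Q_1 (h_i-h_{i+1})(\phi_{i+1}-\phi_i-\phi_1) + Q_{\tN-1} Q_1 a_{\tN-1} h_{\tN}(\phi_{\tN}-\phi_{\tN-1}-\phi_1),
\end{equation}
and I would split $B = B_1 + B_{\mathrm{short}}$, with $B_1$ the first (infinite) piece depending only on $h_1$ and $B_{\mathrm{short}}$ the remaining two pieces depending only on $h_1,\dots,h_{\tN}$.

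For $B_1$, testing against $\phi = \delta_j$ for $j\ge 2$ picks out contributions only at $i=j-1$ and $i=j$, giving
\begin{equation}
Q_j (B_1 h)_j = Q_1 h_1\bigl(a_{j-1} Q_{j-1} - a_j Q_j\bigr).
\end{equation}
The key algebraic simplification comes from detailed balance \eqref{DetailedBalance} together with \eqref{QTildeDefinition}, which yields $a_{j-1}Q_{j-1}/Q_j = b_j/Q_1$, and hence $(B_1 h)_j = h_1(b_j - a_j Q_1)$ for $j\ge 2$. For $j=1$ one gets instead an explicit scalar multiple of $h_1$, namely $-h_1\bigl(a_1 Q_1 + Q_1^{-1}\sum_i a_i Q_i\bigr)$, where the series converges since $a_i Q_i$ decays exponentially by \eqref{boundedByI} and \eqref{QLimit}.

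Now I would estimate $\|B_1 h\|_H^2 = \sum_j Q_j |(B_1 h)_j|^2$ by using \eqref{boundedByI} to bound $|b_j - a_j Q_1| \le C j$, so that
\begin{equation}
\|B_1 h\|_H^2 \le |h_1|^2 \Bigl( C + \sum_{j\ge 2} Q_j \cdot C^2 j^2\Bigr) \le C' |h_1|^2,
\end{equation}
where the series converges by the exponential decay of $Q_j$ from \eqref{QLimit}. Since $Q_1 |h_1| \le \|h\|_{X_1}$, this gives $\|B_1 h\|_H \le C \|h\|_{X_1}$. For $B_{\mathrm{short}}$, each component $(B_{\mathrm{short}} h)_j$ is supported in $j \le \tN$ and is a bounded linear combination of $h_1,\dots,h_{\tN}$; since $Q_i$ is strictly positive for $i\le \tN$, each $|h_i|$ is controlled by $\|h\|_{X_1}$ and the finite sum $\sum_{j\le\tN} Q_j |(B_{\mathrm{short}} h)_j|^2$ is trivially bounded by $C\|h\|_{X_1}^2$. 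Finally I would verify that $Bh$ satisfies the zero-mass constraint in the definition of $H$ by testing against $\phi_i = i$, for which each factor $(\phi_{i+1}-\phi_i-\phi_1)$ vanishes identically; combining the two estimates concludes the proof.

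The only real obstacle is getting the algebra in step two right, in particular handling the $j=1$ component of $B_1$ correctly (where the $\phi_1$ term produces an extra contribution beyond the two-term telescope) and checking that the $N$-dependent boundary pieces from $A$ land in $B_{\mathrm{short}}$ rather than in $B_1$; once the detailed-balance identity $a_{j-1}Q_{j-1} = Q_j b_j/Q_1$ is in hand, the remaining estimates are purely a matter of exponential decay versus polynomial growth.
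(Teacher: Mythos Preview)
Your proof is correct and rests on the same structural observation as the paper's: $B=L-A$ splits into an infinite piece depending only on $h_1$ and a finite-range remainder involving only $h_1,\dots,h_{\tN}$. Your decomposition $B_1+B_{\mathrm{short}}$ is exactly the paper's $B_2+(B_1+B_3)$, modulo the cosmetic regrouping at $i=\tN-1$ that you noted.

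The execution differs slightly. You pass from the weak form to an explicit componentwise formula $(B_1h)_j=h_1(b_j-a_jQ_1)$ via the detailed-balance identity $Q_1a_{j-1}Q_{j-1}=Q_jb_j$, and then bound $\sum_j Q_j(b_j-a_jQ_1)^2$ directly using $|b_j-a_jQ_1|\le Cj$ and the exponential decay of $Q_j$. The paper never leaves the bilinear form: it estimates $|\langle Bh,\phi\rangle_H|\le C\|h\|_{X_1}\|\phi\|_H$ by Cauchy--Schwarz (for the $h_1$-piece this amounts to $\sum_i Q_ia_i^2<\infty$), and then sets $\phi=Bh$. The paper's route is a line shorter and does not need to invoke detailed balance at all; your route yields the explicit strong form of $B_1h$, which is informative but not required here. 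Either way the zero-mass check via $\phi_i=i$ is handled exactly as you describe.
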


\begin{proof}
We compute in weak form:
\begin{align}
\sum_{i=1}^\infty Q_i (Bh)_i \phi_i &= \sum_{i=1}^{\tN-2} Q_iQ_1 a_i ( h_i-h_{i+1}) (\phi_{i+1} - \phi_i - \phi_1) + \sum_{i=1}^\infty Q_i Q_1 a_i h_1 (\phi_{i+1} - \phi_i - \phi_1)\\
&+ Q_{\tN - 1}Q_1a_{\tN - 1} h_{\tN - 1}(\phi_{\tN} - \phi_{\tN - 1} - \phi_1)  \\
 &=: B_1(h,\phi) + B_2(h,\phi) + B_3(h,\phi).
\end{align}
By Cauchy-Schwarz, the fact that $0<c\leq {Q_i}/{Q_{i+1}} \leq C < \infty$ by \eqref{QLimit}, and the equivalence of finite dimensional norms,
\[
|B_1(h,\phi)| \leq C \left(\sum_{i=1}^{\tN-1} Q_i \phi_i^2\right)^{1/2} \left(\sum_{i=1}^{\tN-1} Q_i h_i^2\right)^{1/2} \leq C \|\phi\|_H \|h\|_{X_1}.
\]
Furthermore,
\[
|B_2(h,\phi)| \leq C|h_1|\left(\sum_{i=1}^\infty Q_i a_i^2 \right)^{1/2} \left( \sum_{t=1}^\infty Q_i \phi_i^2 \right)^{1/2} \leq C\|h\|_{X_1}\|\phi\|_H.
\]
Similarly, $|B_3(h,\phi)| \leq C\|h\|_{X_1}\|\phi\|_H$. Setting $\phi = Bh$ then establishes the desired result.
\end{proof}

Next we need to show that $A$, or more precisely its closure, generates a contraction semigroup on $X_1$. We recall the following definition from Pazy \cite{PazyBook}.

\begin{definition}\label{Def:Dissipation}
Let $x \in X$, with $X$ a Banach space. Define
\begin{equation}\label{Def:JSet}
\mathcal{J}(x) := \left\{x^* \in X^* : \langle x^*,x\rangle_{X^*,X} = \|x\|_X^2 = \|x^*\|_{X^*}^2 \right\}.
\end{equation}
A linear operator $A$ with domain of definition $\dom(A) \subset X$ is called \emph{dissipative} if for every $x \in \dom(A)$ there exists an $x^* \in \mathcal{J}(x)$ such that
\[
\langle x^*, Ax\rangle_{X^*,X} \leq 0
\]
\end{definition}

By way of notation, when $X=\ell_1(Q_iw_i)$ and $\|h\|_X = \sum_{i=1}^\infty Q_i w_i |h_i|$ we will write 
\begin{equation}\label{Def:SgnAsDual}
\langle \sgn{h},\phi \rangle_{X^*,X} := \sum_{i=1}^\infty Q_i w_i \phi_i \sgn{h_i} \,.
\end{equation}
By the definition of $\mathcal{J}(x)$, namely \eqref{Def:JSet}, it is clear that if $\langle \sgn h,Ah \rangle_{X^*,X} \leq 0$ for all $h$ in the domain of definition of $A$ then $A$ is dissipative.

\begin{proposition} \label{Prop:ADissipative}
The operator $A$ given by \eqref{def:AOperator} is dissipative on $X_1$.
\end{proposition}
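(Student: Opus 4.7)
By Definition \ref{Def:Dissipation} and the identification \eqref{Def:SgnAsDual}, dissipativity of $A$ on $X_1$ reduces to producing a sequence $\sigma=(\sigma_i)$ with $|\sigma_i|\le 1$ and $\sigma_i h_i = |h_i|$ (for each $h$ in the domain) such that $\sum_i Q_i i (Ah)_i \sigma_i \le 0$. The plan is to take $\sigma_i = \sgn{h_i}$ where $h_i \neq 0$, $\sigma_i \in [-1,1]$ arbitrary otherwise, and evaluate the sum in strong form. Testing \eqref{def:AOperator} against Kronecker sequences and invoking the detailed-balance identity $Q_i Q_1 a_i = Q_{i+1} b_{i+1}$ (which follows from \eqref{DetailedBalance}--\eqref{QTildeDefinition}) produces a clean strong form for $A$: $(Ah)_k = 0$ for $2 \le k \le \tN-2$; $Q_{\tN-1}(Ah)_{\tN-1} = Q_{\tN} b_{\tN} h_{\tN}$; the three-point stencil $(Ah)_k = b_k \tilde h_{k-1} - (b_k + Q_1 a_k) h_k + Q_1 a_k h_{k+1}$ for $k \ge \tN$, with the convention $\tilde h_{\tN-1} = 0$ and $\tilde h_j = h_j$ for $j \ge \tN$; and $Q_1 (Ah)_1 = \sum_{j \ge \tN} Q_j (b_j - Q_1 a_j) h_j$.

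With this strong form in hand, I would assemble $\sum_i Q_i i (Ah)_i \sigma_i$ and apply $\sigma_i h_i = |h_i|$ on diagonal terms together with $\sigma_i h_j \le |h_j|$ on off-diagonal terms; on the $Q_1 (Ah)_1$-contribution this last bound requires the nonnegativity $Q_j (b_j - Q_1 a_j) \ge 0$ for $j \ge \tN$, which is precisely what \eqref{asymptoticallyStrongFragmentation} supplies once $\tN \ge N_z + 1$. Then I would reindex the two shifted sums $\sum i Q_i b_i |\tilde h_{i-1}|$ and $\sum i Q_i Q_1 a_i |h_{i+1}|$ under the common index $j$ of $|h_j|$, using $Q_{j+1} b_{j+1} = Q_j Q_1 a_j$. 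For $j \ge \tN+1$ the stencil contribution to the coefficient of $|h_j|$ collapses to $-Q_j(b_j - Q_1 a_j)$, which exactly cancels the matching $\sigma_1$-bound term.

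The delicate step is the cancellation at $j = \tN$, where three pieces accumulate: $Q_{\tN}(b_{\tN} - Q_1 a_{\tN})$ from the $\sigma_1$-bound, $(\tN-1) Q_{\tN} b_{\tN}$ from the $i = \tN-1$ boundary term, and $-Q_{\tN}(\tN b_{\tN} - Q_1 a_{\tN})$ from the $i = \tN$ stencil (where the convention $\tilde h_{\tN-1} = 0$ kills one entry). Their sum is $Q_{\tN}\bigl[(b_{\tN} - Q_1 a_{\tN}) + (\tN-1) b_{\tN} - \tN b_{\tN} + Q_1 a_{\tN}\bigr] = 0$, exactly. Since every $|h_j|$-coefficient therefore either vanishes or is nonpositive, we conclude $\sum_i Q_i i (Ah)_i \sigma_i \le 0$ as desired. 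This single algebraic identity at the cutoff is the only non-routine step in the argument, and it is exactly what motivates the precise choice of the boundary coefficient $Q_{\tN-1} Q_1 a_{\tN-1}$ in \eqref{def:AOperator}.
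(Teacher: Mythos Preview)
Your argument is correct: the strong form of $A$ is computed accurately, the use of $\sigma_i h_j \le |h_j|$ on off-diagonal terms (relying on \eqref{asymptoticallyStrongFragmentation} for the $(Ah)_1$ piece) is valid, and the reindexing together with the boundary cancellation at $j=\tN$ is sound.

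It is worth noting, though, that the paper proceeds differently and somewhat more structurally. Rather than passing to the strong form, the paper substitutes $\phi_i = i\,\sgn{h_i}$ directly into the weak form \eqref{def:AOperator}, splits the flux $(h_i - h_{i+1})$, and uses detailed balance $Q_iQ_1a_i = Q_{i+1}b_{i+1}$ to shift one index. The boundary term in \eqref{def:AOperator} is then absorbed into the shifted sum, producing a clean decomposition $E_1+E_2+E_3$: the piece $E_1$ collects the ``sign-jump'' terms $h_i(\sgn{h_{i\pm1}}-\sgn{h_i})\le 0$, while $E_2+E_3$ reduces to $2\sum_{\sgn{h_i}\ne\sgn{h_1}} Q_i|h_i|(Q_1a_i-b_i)\le 0$ via \eqref{asymptoticallyStrongFragmentation}. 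This avoids your off-diagonal bound $\sigma_ih_j\le|h_j|$ entirely, yields strictly negative (rather than identically zero) contributions, and generalizes with essentially no change to the weighted case $X_{1+k}$ in Proposition~\ref{Prop:L1Dissipation}. By contrast, your route depends on an exact algebraic cancellation that will \emph{not} persist for weights $w_i=i^{1+k}$ with $k>0$; the paper's $E_1,E_2,E_3$ organization is what makes that extension clean.
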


\begin{proof}
Rearranging our sum and using \eqref{DetailedBalance} to say $Q_iQ_1a_i=Q_{i+1}b_{i+1}$,  we find that
\begin{align}
&\langle \sgn{h} , A h \rangle_{X_1^*,X_1}  
\\& \quad = 
\sum_{i = \tN}^\infty Q_i Q_1a_i h_i ((i+1)\sgn{h_{i+1}} - i \,\sgn{h_i} - \sgn{h_1})) 
\\&\qquad -
\sum_{i = \tN}^\infty Q_i b_i h_i (i\, \sgn{h_i} - (i-1)\sgn{h_{i-1}} - \sgn{h_1}) 
\\&\quad=
\sum_{i = \tN}^\infty  Q_i h_i \Bigl(
Q_1a_i(i+1)(\sgn{h_{i+1}} -\sgn{h_i})  
+ b_i(i-1)(\sgn{h_{i-1}}-\sgn{h_i} ) 
\Bigr)
\\&\qquad+ 
\sum_{i = \tN}^\infty Q_i|h_i| (a_iQ_1 - b_i)
+ 
\sgn{h_1} \sum_{i = \tN}^\infty Q_ih_i(b_i - Q_1a_i)
\\&\quad=: E_1 + E_2 + E_3,
\end{align}
Because $h_i(\sgn{h_{i\pm1}}-\sgn{h_i})\le0$, we see $E_1 \leq 0$. 
Furthermore, we have that
\[
    E_2 + E_3 = 2 \sum_{\mathclap{\substack{i=\tN \\ \sgn{h_1} \neq \sgn{h_i}}}}^\infty 
Q_i|h_i|(a_iQ_1 - b_i)\leq 0 \,.
\]
This readily implies that $A$ is dissipative (see Definition \ref{Def:Dissipation}).
\end{proof}

\begin{remark}
In the case that $a_i \sim i$ it is probably possible to prove that $L$ has a spectral gap in $X_1$. We do not pursue this line of analysis, because in most of the physical cases $a_i \sim i^\alpha$, with $\alpha \in (0,1)$.
\end{remark}

Next we recall two results from \cite{CanizoLods} (Corollary 2.11 and Theorem 3.5),
that concern the closure of $L$ (which we also denote below by $L$).

\begin{proposition}\label{Prop:CanizoLSemigroup}
For some $\lambda_c>0$, the operator $L$ generates a contraction semigroup $e^{Lt}$ on $H$ satisfying
\[
\|e^{Lt}\|_{\mathcal{L}(H)} \leq e^{-\lambda_c t} 
\quad\text{ for all $t\ge0$.}
\]
Furthermore, for $\eta>0$ sufficiently small there exists constants $M$ and $\lambda_\eta>0$ so that the operator $L$ generates a semigroup on $X_\eta$ satisfying
\[
\|e^{Lt}\|_{\mathcal{L}(X_\eta)} \leq Me^{-\lambda_\eta t}
\quad\text{ for all $t\ge0$.}
\]
\end{proposition}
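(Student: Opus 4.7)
The plan is to handle the two claims in turn. For the first (contraction on $H$), I would use the fact that the weak form \eqref{Def:L} is symmetric in the $H$-inner product, so $L$ extends to a negative self-adjoint operator on the mass-zero subspace of $H$, and the spectral gap \eqref{Eqn:L2Gap} reads $\langle h, Lh \rangle_H \leq -\lambda_c \|h\|_H^2$. Existence of the semigroup follows from Lumer--Phillips applied to the closure of $L$, and the decay $\|e^{Lt}h\|_H \leq e^{-\lambda_c t}\|h\|_H$ comes from differentiating $\|e^{Lt}h\|_H^2$ in time and using the gap. This is essentially Corollary 2.11 of \cite{CanizoLods}.

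For the second claim, I would invoke the extension principle (Proposition \ref{PropMischlerMouhot}) with $Z = H$, $Y = X_\eta$, $\lambda_Z = \lambda_c$, and any $\lambda_Y \in (0,\lambda_c)$. Three items must be checked. First, $H$ embeds continuously into $X_\eta$ for small $\eta$: by Cauchy--Schwarz,
\[
\sum_{i=1}^\infty Q_i e^{\eta i}|h_i| \leq \Bigl(\sum_{i=1}^\infty Q_i e^{2\eta i}\Bigr)^{1/2}\|h\|_H,
\]
and the first factor is finite whenever $e^{2\eta} < z_s/z$, by \eqref{QLimit}. Second, decomposing $L = A + B$ as in Section 2, one must show that $A$ generates a semigroup on $X_\eta$ with decay rate at least $\lambda_Y$. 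The dissipativity computation parallels Proposition \ref{Prop:ADissipative}: the sign-function telescoping in the $X_1$ case is replaced by its exponentially weighted analogue, and the extra terms arising from $e^{\eta(i\pm 1)}-e^{\eta i}$ are of order $\eta$; these can be absorbed into the favorable term $\sum Q_i |h_i|(b_i - Q_1 a_i)$ using \eqref{asymptoticallyStrongFragmentation}, provided $\eta$ is chosen small enough and $\tN$ is taken large enough. Third, $B$ is bounded from $X_\eta$ into $H$: this is essentially the computation of Lemma \ref{Lemma:BRegularizing}, since $B$ consists of finitely many terms plus a single term proportional to $h_1$, and all relevant weights are exponentially small in $i$.

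The main technical obstacle is the second item, verifying dissipativity of $A$ on $X_\eta$ with a controlled rate, where one must keep careful track of the exponential weight in the telescoping sums and exploit the fact that \eqref{asymptoticallyStrongFragmentation} gives a positive gap $b_i - Q_1 a_i \geq \delta a_i$ for $i \geq \tN$. With all three ingredients in place, Proposition \ref{PropMischlerMouhot} delivers the desired bound $\|e^{Lt}\|_{\mathcal{L}(X_\eta)} \leq M e^{-\lambda_\eta t}$. The full argument is essentially that of Theorem 3.5 in \cite{CanizoLods}, repurposed here using the version of the extension principle proved above.
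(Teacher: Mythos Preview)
The paper does not actually prove this proposition: it is stated as a recall of Corollary~2.11 and Theorem~3.5 from \cite{CanizoLods}, with no argument given. Your proposal correctly identifies this and sketches the Ca\~nizo--Lods proof itself; the outline (self-adjointness plus spectral gap on $H$ for the first part, and the $L=A+B$ decomposition with the extension principle for the second) is accurate and matches what is done in \cite{CanizoLods}.
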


Next we prove that the closure of $A$ indeed generates a semigroup. 

\begin{lemma}\label{Lem:AContraction}
The closure of $A$ (which we also denote by $A$), generates a contraction semigroup on $X_1$.
\end{lemma}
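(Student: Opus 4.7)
The plan is to invoke the Lumer--Phillips theorem. Proposition \ref{Prop:ADissipative} gives that $A$ is dissipative on $X_1$, and its initial domain---finite-support sequences satisfying the zero-mass constraint---is visibly dense in $X_1$. What remains is to verify the range condition: for some $\lambda>0$, $\operatorname{ran}(\lambda I - A)$ is dense in $X_1$. Once this is established, Lumer--Phillips immediately gives that the closure of $A$ generates a contraction semigroup on $X_1$.

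To verify range density, I would construct the resolvent explicitly. Unpacking the weak form \eqref{def:AOperator} with the aid of detailed balance \eqref{DetailedBalance} shows that $(Ah)_j=0$ for $2\leq j\leq \tN - 2$; that on the tail $j\geq \tN$, $A$ acts as the tridiagonal operator
\[
(Ah)_j = b_j(h_{j-1}-h_j) - Q_1 a_j(h_j - h_{j+1}),
\]
with the convention $h_{\tN-1}=0$ built into the $j=\tN$ term; and that $h_{\tN-1}$ and $h_1$ are coupled to $h_{\tN}$ by two scalar boundary equations. For a finite-support zero-mass datum $g$, the resolvent equation $(\lambda I - A)h=g$ therefore decouples: the mid-range indices give $h_j = g_j/\lambda$, the tail is governed by a second-order linear recurrence, and $h_{\tN-1}$, $h_1$ are determined algebraically. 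The essential ingredient is the strict fragmentation dominance $b_j\geq (Q_1+\delta)a_j$ from \eqref{asymptoticallyStrongFragmentation}, which singles out the unique tail solution that decays geometrically and lies in $\ell^1(Q_i i)$; a contraction-mapping or continued-fraction argument on the tail makes this precise. Because $A$ preserves zero mass and $g$ has zero mass, the constructed $h$ automatically inherits the mass constraint.

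Since the solution $h$ so constructed generally has infinite support, the final step is to exhibit finite-support zero-mass sequences $h^{(n)}\in\dom(A)$ with $(\lambda I - A)h^{(n)}\to g$ in $X_1$. I would do this by smoothly cutting off $h$ beyond level $n$ and adding a small lower-index correction to restore zero mass. The exponential decay of $Q_i$ from \eqref{QLimit}, combined with the geometric decay of $h$ from the recurrence analysis, compensates for the factors of $i$ introduced by the bound $a_i, b_i \leq C_2 i$, so that both $\lambda(h-h^{(n)})$ and $A(h-h^{(n)})$ vanish in $X_1$ as $n\to\infty$.

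The main obstacle is this approximation step: $A$ is unbounded and carries factors of order $i$, so a crude truncation would blow up under $A$, and it is crucial that the resolvent solution decay strictly faster than $Q_i$ by a definite margin. This is precisely what \eqref{asymptoticallyStrongFragmentation} provides, and it is the technical heart of the argument; everything else is the standard Lumer--Phillips machinery combined with the dissipativity already proved.
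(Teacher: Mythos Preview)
Your overall strategy matches the paper's: both invoke Lumer--Phillips, using the dissipativity from Proposition~\ref{Prop:ADissipative} and density of the finite-support zero-mass sequences in $X_1$. The difference is entirely in how the range condition is obtained, and here the paper takes a far shorter path.

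The paper does not touch the recurrence at all. By Proposition~\ref{Prop:CanizoLSemigroup}, $L$ already generates a semigroup on the Hilbert space $H$; since $B$ is bounded on $H$ (Lemma~\ref{Lemma:BRegularizing}), bounded-perturbation theory gives that $A=L-B$ also generates a semigroup on $H$. Hence $A-\lambda I$ is invertible on $H$ for $\lambda>0$ large, so its range contains $H$, which is dense in $X_1$. That is the whole argument for range density. Its advantage is that it recycles the Hilbert-space spectral theory already established by Ca\~{n}izo and Lods, needing only the structural facts $A=L-B$ with $B$ bounded and $H\hookrightarrow X_1$ densely.

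Your direct construction of the resolvent is viable in principle but much more laborious, and your sketch has a soft spot. When $a_i\to\infty$ (as in the physical case $a_i=i^\alpha$, $\alpha\in(0,1)$), the homogeneous tail recurrence has asymptotic characteristic roots $1$ and $z_s/z>1$; the $\lambda$-term shifts the smaller root below $1$ only by an amount of order $\lambda/(Q_1 a_i)\to 0$, so the non-growing solution need \emph{not} decay geometrically as you claim. Fortunately you do not need geometric decay: mere boundedness of $h$ already suffices both for $h\in X_1$ and for the truncation error $\|A(h-h^{(n)})\|_{X_1}$ to vanish, because the exponential decay of $Q_i$ from \eqref{QLimit} absorbs every polynomial factor of $i$ on its own. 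So your route can be repaired, but the paper's one-line detour through $H$ is the cleaner argument and avoids all of this case analysis.
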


\begin{proof}
The Lumer-Phillips theorem (see e.g. \cite{EngelNagel} Theorem II.3.15) states that the following are equivalent for a densely-defined, dissipative operator $A$:
\begin{enumerate}
\item The range of $(A-\lambda I)$ is dense for some $\lambda >0$.
\item $A$ is closable and its closure (also denoted by $A$) generates a contraction semigroup.
\end{enumerate}

We know that $H \subset X_1$, and that $H$ is dense in $X_1$. By Proposition \ref{Prop:CanizoLSemigroup} we know that $L$ generates a contraction semigroup on $H$. As $B$ is bounded on $H$, we know that $A$ (restricted to $H$) generates a semigroup on $H$. Thus it must be (see e.g. Theorem 1.5.3 in \cite{PazyBook}) that for $\lambda >0$ large enough $A-\lambda I$ is invertible on $H$. Thus the range of $A-\lambda I$  contains $H$, and thus is dense in $X_1$. Because $A$ is dissipative by Proposition \ref{Prop:ADissipative}, the Lumer-Phillips theorem then implies that $A$ generates a contraction semigroup on $X_1$.
\end{proof}

By combining Proposition \ref{PropMischlerMouhot} and Lemmas \ref{Lemma:BRegularizing} and \ref{Lem:AContraction} along with Proposition \ref{Prop:CanizoLSemigroup} we immediately obtain the following:

\begin{theorem} \label{Cor:LStableSemigroup}
The closure of $L$ generates a semigroup $e^{Lt}$ on $X_1$ uniformly bounded in time:
\[
\|e^{Lt}\|_{\mathcal{L}(X_1)} \leq M \qquad\text{for all $t\ge0$}.
\]
\end{theorem}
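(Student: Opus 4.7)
The plan is to apply the extension principle (Proposition \ref{PropMischlerMouhot}) in its time-independent form, taking $Z=H$ and $Y=X_1$, with $A(t)\equiv A$ and $B(t)\equiv B$. Essentially all of the work has already been done in the preceding lemmas, and the proof reduces to verifying that the hypotheses of the proposition are satisfied and invoking the extra flexibility noted in the remark after it.

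First I would record the continuous embedding $H\hookrightarrow X_1$, which is what allows the two spaces to play the roles of $Z\subset Y$ in Proposition \ref{PropMischlerMouhot}. This follows by Cauchy-Schwarz,
\begin{equation}
\|h\|_{X_1}=\sum_{i=1}^\infty Q_i i|h_i|\le\Bigl(\sum_{i=1}^\infty Q_i i^2\Bigr)^{1/2}\|h\|_H,
\end{equation}
where the first factor is finite because $Q_i$ decays exponentially by \eqref{QLimit}. This is the same embedding already used (tacitly) in the proof of Lemma \ref{Lem:AContraction}.

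Next I would match the three hypotheses of Proposition \ref{PropMischlerMouhot}. Hypothesis (1): Proposition \ref{Prop:CanizoLSemigroup} tells us that $L=A+B$ generates a semigroup on $H$ with $\|e^{Lt}\|_{\mathcal{L}(H)}\le e^{-\lambda_c t}$, so we take $M_Z=1$ and $\lambda_Z=\lambda_c>0$. Hypothesis (2): Lemma \ref{Lemma:BRegularizing} shows $B\in\mathcal{L}(X_1,H)$ is bounded, and continuity in $t$ is automatic because $B$ is time-independent. Hypothesis (3): Lemma \ref{Lem:AContraction} shows $A$ generates a contraction semigroup on $X_1$, so $M_V=1$ and we may take $\lambda_Y=0$. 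The only condition that requires a second glance is the inequality $\lambda_Z>\lambda_Y$, which here becomes $\lambda_c>0$; this holds, and in fact the remark immediately following Proposition \ref{PropMischlerMouhot} explicitly permits $\lambda_Y=0$.

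Proposition \ref{PropMischlerMouhot} then delivers the conclusion in the form $\|e^{Lt}\|_{\mathcal{L}(X_1)}\le M e^{-\lambda_Y t}=M$ for all $t\ge0$, which is exactly the uniform-in-time bound claimed by the theorem. There is no real analytic obstacle at this stage: the hard inputs—the $H$-spectral gap from Ca\~{n}izo-Lods, the regularizing estimate for $B$, and the Lumer-Phillips dissipativity of $A$ on $X_1$—are already in hand, and the remaining task is just bookkeeping to align them with the hypotheses of the extension principle.
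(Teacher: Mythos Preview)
Your proposal is correct and matches the paper's approach exactly: the paper states that the theorem follows immediately by combining Proposition~\ref{PropMischlerMouhot} with Lemmas~\ref{Lemma:BRegularizing} and~\ref{Lem:AContraction} and Proposition~\ref{Prop:CanizoLSemigroup}, which is precisely the verification you carry out. You have simply made explicit the matching of hypotheses (including the embedding $H\hookrightarrow X_1$ and the choice $\lambda_Y=0$) that the paper leaves to the reader.
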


\section{Polynomial Decay Estimates}\label{Sec:PolyDecay}

In this section we prove Theorem \ref{Thm:AlgebraicDecay}, namely that perturbations of equilibrium small in $X_k$ will decay with a uniform, polynomial rate. We will first prove polynomial decay results for $e^{Lt}$. The following interpolation result is a modification of a theorem in \cite{Engler}, where it was originally used to study the convergence of travelling waves.

\begin{theorem}\label{SemigroupInterpolation}
Let $\eta\in(0,1)$ and $m,k\in\R$ with $0<m<k$. Let $\set{S(t)}_{t\ge0}$ be a family of linear operators on $X_1$
which for any $t>0$ satisfies
\[
\|S(t)u - S(t) v\|_{X_1} \leq M\|u-v\|_{X_1}, \qquad \|S(t)u\|_{X_\eta} \leq Me^{-\lambda_\eta t}\|u\|_{X_\eta},
\]
where $u,v$ are arbitrary elements of the appropriate spaces, $M$ is a fixed positive constant and $\lambda_\eta > 0$. 
Then the operators $S(t)$ necessarily are bounded from $X_{1+k}$ to $X_{1+m}$
and satisfy 
\begin{equation}
    \|S(t) u\|_{X_{1+m}} \leq C(1+t)^{-(k-m)} \|u\|_{X_{1+k}}
    \quad\text{for all $u\in X_{1+k}$ and $t\ge0$,}
\end{equation}
where $C$ depends on $m,k,M$ and $\lambda_\eta$.
\end{theorem}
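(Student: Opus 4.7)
The plan is to split the input $u\in X_{1+k}$ at an index cutoff $N$, treating the low-index and tail pieces separately before optimizing $N=N(t)$. Decompose $u=u^{\le N}+u^{>N}$, where $u^{\le N}$ is the restriction of $u$ to indices $i\le N$. The low-index part has finite support and therefore lies in $X_\eta$ with $\|u^{\le N}\|_{X_\eta}\le e^{\eta N}\|u\|_{X_1}$, while a trivial moment estimate gives $\|u^{>N}\|_{X_1}\le N^{-k}\|u\|_{X_{1+k}}$. Combining the exponential-decay hypothesis on $u^{\le N}$ with the continuous embedding $X_\eta\hookrightarrow X_{1+m}$ (coming from $i^{1+m}\le C_{m,\eta}e^{\eta i}$ for all $i\ge1$) controls the low-index piece by $CMe^{-\lambda_\eta t}e^{\eta N}\|u\|_{X_{1+k}}$.

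For the tail, the $X_1$-Lipschitz hypothesis only gives $\|S(t)u^{>N}\|_{X_1}\le MN^{-k}\|u\|_{X_{1+k}}$, but we need an $X_{1+m}$ bound. To upgrade from $X_1$ to $X_{1+m}$, I would use the moment-type H\"older inequality
\[
\|w\|_{X_{1+m}}\le \|w\|_{X_1}^{1-m/k}\,\|w\|_{X_{1+k}}^{m/k},
\]
which is immediate from the factorization $Q_ii^{1+m}=(Q_ii)^{1-m/k}(Q_ii^{1+k})^{m/k}$. Given a uniform bound $\|S(t)u^{>N}\|_{X_{1+k}}\le C\|u\|_{X_{1+k}}$, this yields a tail bound of order $N^{-(k-m)}\|u\|_{X_{1+k}}$. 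Adding the two estimates and choosing $N=\lambda_\eta t/(2\eta)$ balances $e^{-\lambda_\eta t+\eta N}\sim e^{-\lambda_\eta t/2}$ against $N^{-(k-m)}\sim t^{-(k-m)}$; matching with the trivial bound on $t\in[0,1]$ then delivers the claimed $(1+t)^{-(k-m)}$ decay.

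The hard step is the uniform $X_{1+k}$-boundedness of $S(t)$ used in the H\"older interpolation, which does not follow directly from the two stated hypotheses. My first instinct is to establish it by mirroring the extension-principle argument of Section~\ref{Sec:X1Linear}: decompose the generator of $S(t)$ into a dissipative piece on $X_{1+k}$ and a regularizing remainder whose range sits in a smaller space enjoying exponential decay. Failing that, one can bypass this step by an iterative dyadic decomposition $u=\sum_j w_j$ with $w_j$ supported on $[2^j,2^{j+1})$, applying the direct interpolation $\|w\|_{X_{1+m}}\le C_\theta\|w\|_{X_1}^\theta\|w\|_{X_\eta}^{1-\theta}$ to each $w_j$ (valid uniformly in $\theta\in(0,1)$), and truncating the sum at $J(t)\sim\log t$ to absorb the doubly-exponential growth from the $X_\eta$ weight on large indices; the residual piece is then small in $X_{1+m}$ by a tail moment estimate and needs only the $X_1$-Lipschitz bound.
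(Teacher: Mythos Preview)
Your plan has a genuine gap at exactly the point you flag as ``the hard step.'' The theorem is stated purely in terms of two hypotheses on an \emph{abstract} family $\{S(t)\}$: Lipschitz in $X_1$ and exponentially contracting on $X_\eta$. Your tail estimate requires a uniform bound $\|S(t)\,\cdot\,\|_{X_{1+k}}\le C\|\cdot\|_{X_{1+k}}$, and neither of your proposed fixes supplies it from those hypotheses alone. Invoking the extension principle presupposes that $S(t)$ is the semigroup generated by a specific operator admitting a dissipative/regularizing splitting on $X_{1+k}$; that is extra structure, not part of the theorem. Your dyadic fallback faces the same obstruction: for indices $j>J(t)$ the piece $w_j$ need not lie in $X_\eta$ (since $u\in X_{1+k}$ does not imply any exponential moment), so the only information you have about $S(t)w_j$ is its $X_1$-size, and there is no way to pass from $\|S(t)w_j\|_{X_1}$ to $\|S(t)w_j\|_{X_{1+m}}$ without knowing something about where the mass of $S(t)w_j$ sits---which the hypotheses simply do not give. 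A secondary (fixable) issue is that your truncations $u^{\le N}$, $u^{>N}$ violate the zero-mass constraint built into the spaces $X_1$, $X_\eta$, so the hypotheses do not literally apply to them without a correction.

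The paper's proof avoids this trap by working with the $K$-functional
\[
K(s,u)=\inf_{v\in X_\eta}\bigl(\|u-v\|_{X_1}+e^s\|v\|_{X_\eta}\bigr),
\]
proving two-sided bounds $K(s,u)\asymp \sum_i Q_i|u_i|\,(i\wedge e^{s+\eta i})$ (with the infimizing $v$ chosen to respect the mass constraint), and showing that $\|u\|_{X_{1+r}}$ is equivalent to $\int_\R K(s,u)\,h_r(s)\,ds$ for a suitable weight $h_r$. The two hypotheses then combine directly to give $K(s,S(t)u)\le M\,K(s-\lambda_\eta t,u)$, because the $X_\eta$-bound is applied to the \emph{variable} approximant $v$, never to $u$ or any fixed piece of it. A change of variables and a pointwise comparison $H_m(s+\lambda_\eta t)\le C(1+t)^{m-k}H_k(s)$ for the integrated weights finishes the argument. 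The point is that interpolating at the level of the $K$-functional lets the infimum absorb the role your cutoff $N$ was trying to play, without ever needing $S(t)$ to be bounded on $X_{1+k}$.
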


\begin{proof}
The proof is very similar to that found in \cite{Engler}, with modifications necessary, however,  to handle the mass constraint and weighted norm on $X_1$. 

1. Consider $K: \mathbb{R} \times X_1 \to \mathbb{R}$ defined by
\begin{equation}
K(s,u) = \inf_{v \in X_\eta}(\|u-v\|_{X_1} + e^{s}\|v\|_{X_\eta}).
\end{equation}
In interpolation theory \cite{BerghLofstrom} this is known as a modified K-functional. 
For fixed $s$, $K(s,\cdot)$ is a norm.
Clearly $K(s,u)$ is increasing in $s$ and bounded above by $\|u\|_{X_1}$. 
Furthermore, we claim that $K$ is absolutely continuous in $s$. Indeed, if we define $\tilde K(\tilde s,u) := K(\log \tilde s, u)$, then $\tilde K(\cdot,u)$ can be written an the infimum of affine functions, and thus must be concave. This readily implies that $K(s,u)$ is absolutely continuous in $s$.

We begin by proving upper and lower bounds on $K$. First, we get the lower bound
\begin{align}\label{Eqn:KLower}
K(s,u) &\geq 
\sum_{i=1}^\infty Q_i \inf_{v\in\R} (|u_i-v|i + e^{s + \eta i} |v|) 
= \sum_{i=1}^\infty Q_i |u_i|(i \wedge e^{s + \eta i}) \,.
\end{align}
Next, observe $x\wedge e^{s+\eta x}=x$ for all real $x$ if and only if 
$s\ge s_\eta:=-1-\log \eta$. Thus for $s\ge s_\eta$,
\begin{equation}
K(s,u) \le \|u\|_{X_1} = \sum_{i=1}^\infty Q_i|u_i|(i\wedge e^{s+\eta i}) \,.
\end{equation}
Suppose now that $s<s_\eta$. Then 
$1/\eta\in \{x: e^{s+\eta x}\le x\} = [z_-,z_+]\subset(0,\infty)$.
Let $j(s)$ be the least integer greater than or equal to $z_+$,
and define the sequence $v_s(u)$ by
\begin{equation}
v_s(u)_i := \begin{cases} u_i &\text{ for } i < j(s), \\
({Q_{i} i})^{-1}
{\sum\limits_{k \geq j(s)} Q_kk u_k} &\text{ for } i = j(s) , \\
0 &\text{ for } i>j(s). 
\end{cases}
\end{equation}
In particular note that $\sum_{i=1}^\infty Q_i i v_s(u)_i=0$, so $v_s(u)\in X_\eta$.
Writing $j=j(s)$,
we then find 
\begin{align}
K(s,u) &\leq \|u-v_s(u)\|_{X_1} + e^s \|v_s(u)\|_{X_\eta} \\
&=\left| \sum_{i>j} Q_i i u_i \right|  +\sum_{i>j} Q_i i|u_i|   
+ e^s\sum_{i=1}^{j-1} Q_i e^{\eta i} |u_i|
+ e^s \frac{Q_{j} e^{\eta j}}{Q_{j} j} \left| \sum_{i=j}^\infty Q_i i u_i \right| \\
&\leq \left( 2 + \frac{e^{s + \eta j}}{j}\right) \sum_{i=j}^\infty Q_i i |u_i| + \sum_{i=1}^{j-1} Q_i e^{s + \eta i} |u_i| \,.
\end{align}
Now, $j\inv e^{s+\eta j}\le z_+\inv e^{s+\eta(z_++1)}=e^\eta$, and $i\ge j$
implies $i=i\wedge e^{s+\eta i}$.  Furthermore, whenever $1\le i\le z_-$ 
we have $e^{s+\eta i}\le e^{s+\eta z_-}=z_-\le 1/\eta \le i/\eta = (i\wedge e^{s+\eta i})/\eta$.
By these estimates we find that with $C=\max\{2+e^\eta,1/\eta\}$ we have
that for any $s\in\R$,
\begin{equation}
K(s,u) \le C \sum_{i=1}^\infty Q_i |u_i|(i\wedge e^{s+\eta i}) \,.
\label{Eqn:KUpper}
\end{equation}

2. In the next step, for $r > 0$ we set
\begin{equation}
h_r(s) := \begin{cases} e^{-s} &\text{ for } s \geq 0, \\
(1-s)^{r-1} &\text{ for } s \leq 0,
\end{cases}
\end{equation}
and define the norm
\begin{equation}
\|u\|_* := \int_\R K(s,u) h_r(s) \ds\,.
\end{equation}
We claim this is equivalent to the norm in $X_{1+r}$. 
By \eqref{Eqn:KLower} and \eqref{Eqn:KUpper}, it suffices to show there exist 
$C_-,C_+>0$ 
independent of $i$ such that
\begin{equation}
C_-(1+i)^{1+r} \le  \int_\R (i\wedge e^{s+\eta i})h_r(s)\,ds
\le
C_+(1+i)^{1+r} \quad\mbox{ for $i\ge 1$}.
\label{Eqn:hrbounds}
\end{equation}
To show this, we first bound the part of the integral over $s\in[0,\infty)$, finding that
\begin{equation} \label{Eqn:splusbound}
1\le \int_0^\infty (i\wedge e^{s+\eta i}) e^{-s}\,ds \le i
\le (1+i)^{1+r}\,.
\end{equation}
For the part over $s\in(-\infty,0]$, after changing variables twice via
$z=-s$, $\sigma=z-\eta i$,
we have 
\begin{align}
\int_{-\infty}^0 (i\wedge e^{s+\eta i}) (1-s)^{r-1}\,ds 
&\le i \int_0^\infty (1\wedge e^{-z+\eta i})(1+z)^{r-1}\,dz 
\\& =i\int_0^{\eta i} (1+z)^{r-1}\,dz 
+i \int_0^\infty e^{-\sigma}(1+\eta i+\sigma)^{r-1}\,d\sigma
\\& \le C i (1+\eta i)^r \le C(1+i)^{1+r} \,.
\end{align}
This establishes the upper bound in \eqref{Eqn:hrbounds}.

To get the lower bound, choose $I_\eta$ so large that $i>I_\eta$ implies 
$\eta i-\log i \ge \frac12\eta i$.  
For $i\le I_\eta$ we have $(1+i)^{r+1}\le (1+I_\eta)^{r+1}$,
hence we get the lower bound in \eqref{Eqn:hrbounds}
with $C_-=(1+I_\eta)^{-1-r}$ by using \eqref{Eqn:splusbound}.
For $i>I_\eta$, we find
\begin{align}
\int_{-\infty}^0 (i\wedge e^{s+\eta i}) (1-s)^{r-1}\,ds 
&=  i \int_0^\infty (1\wedge e^{-z+\eta i-\log i}) (1+z)^{r-1}\,dz
\\& \ge i\int_0^{\eta i/2} z^{r-1}\,dz \ge C(1+i)^{1+r}\,.
\end{align}
Thus $\|\cdot\|_*$ is equivalent to $\|\cdot\|_{X_{1+r}}$. 

3. Now, let $H_r(t) := \int_t^\infty h_r(\tau) d\tau$ and $k(t) := \frac{d}{dt} K(t,u)$. 
We claim that
\begin{equation}
H_m(s+t) \leq C H_k(s) (1+t)^{m-k},
\end{equation}
for all $s \in \R$, and for $t \geq 0$. To prove the claim, we first note that
\[
H_m(s) = \begin{cases} e^{-s} &\text{for } s \geq 0 ,\\ 1+ \frac{(1-s)^m - 1}{m} &\text{ for } s < 0 ,\end{cases}
\]
and furthermore, for $s<0$, we can find $c,C>0$ so that
\begin{equation}\label{Eqn:HkEquivalence}
c(1-s)^m \leq H_m(s) \leq C (1-s)^m.
\end{equation}
We then consider separate cases. First, if $s \geq 0$,
\[
H_m(s+t) = e^{-(s+t)} \leq C e^{-s} (1+t)^{m-k} =  C H_k(s) (1+t)^{m-k}.
\]
Next suppose that $s < 0 \leq s+t$. Then
\begin{align}
H_m(s+t) = e^{-(s+t)} &\leq C(1+s+t)^{-k} = C\frac{(1-s)^k}{(1+t-s(s+t))^k}
\le C(1+t)^{-k}H_k(s) \,,
\end{align}
where we have used \eqref{Eqn:HkEquivalence}. 
Finally, in the case that $t <-s$, we note that because $m-k<0$,
\[
(1-(s+t))^m \le (1-s)^m \le (1-s)^k(1+t)^{m-k} \,.
\]
In light of \eqref{Eqn:HkEquivalence} this proves the claim. 

4. Next, we use the assumed bounds on our operators to estimate
\begin{align}
K(s,S(t)u) &\leq \inf_{v \in X_\eta}(\|S(t)u - S(t)v\|_{X_1} + e^{s}\|S(t)v\|_{X_\eta})\\
&\leq M\inf_{v \in X_\eta}(\|u-v\|_{X_1} + e^{s-\lambda_\eta t} \|v\|_{X_\eta}) \\
&= MK(s-\lambda_\eta t,u).
\end{align}
We remark that for $u \in X_\eta$ we have that $0 \leq K(s,u) \leq 
\|u\|_{X_1}\wedge e^s\|u\|_{X_\eta}$, 
and thus for $u \in X_\eta$ we have that $H_r(s)K(s,u)$ goes to zero as $s \to \pm \infty$. Thus we may use integration by parts, and our previous estimates, to obtain the following for any $u \in X_\eta$:
\begin{align}
\|S(t)u\|_{X_{1+m}} &\leq C \int_\R K(s,S(t)u)h_m(s) \ds \\
& \leq C\int_\R K(s-\lambda_\eta t,u)h_m(s) \ds \\
& = C \int_\R k(s,u)H_m(s+\lambda_\eta t) \ds\\
&\leq C(1+t)^{m-k} \int k(s,u) H_k(s) \ds \\
&= C(1+t)^{m-k} \int_\R K(s,u) h_k(s) \ds \\
&= C(1+t)^{m-k} \|u\|_* \leq C(1+t)^{m-k} \|u\|_{X_{1+k}}.
\end{align}
Because  $X_\eta$ is dense in $X_{1+k}$, we have the desired inequality. This completes the proof.
\end{proof}

We will apply the previous theorem to the semigroup generated by $L$. We first state a proposition, which will be an immediate consequence of Lemma \ref{lem:AgEvolutionFamily}
in the following section.

\begin{proposition}\label{Cor:XkSemigroup}
The operator $L$ generates a  semigroup on the space $X_{1+k}$, for any $k\geq 0$.
\end{proposition}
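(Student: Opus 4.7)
The plan is to mimic the argument from Section \ref{Sec:X1Linear}, decomposing $L = A + B$ via the weak form \eqref{def:AOperator}, now with the cutoff $\tN$ enlarged in a $k$-dependent way if necessary. The four substeps are: (i) show $B \in \mathcal{L}(X_{1+k})$; (ii) show $A$ is dissipative on $X_{1+k}$; (iii) invoke the Lumer--Phillips theorem to conclude that the closure of $A$ generates a contraction semigroup on $X_{1+k}$; and (iv) apply the bounded perturbation theorem for $C_0$-semigroups to recover $L = A + B$.

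Step (i) is essentially immediate from the decomposition $Bh = B_1 + B_2 + B_3$ used in the proof of Lemma \ref{Lemma:BRegularizing}: the pieces $B_1, B_3$ involve only finitely many components of $h$, while $B_2$ has coefficient sequence $(Q_i Q_1 a_i)_i$ which decays exponentially by \eqref{QLimit} and \eqref{boundedByI} and is therefore summable against any polynomial weight. A direct estimate then yields $\|Bh\|_{X_{1+k}} \leq C_k \|h\|_{X_{1+k}}$.

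Step (ii) is the main content. Following the proof of Proposition \ref{Prop:ADissipative}, I would pair $Ah$ against the test sequence $\phi_i = i^{1+k}\sgn{h_i}$ (so $\phi_1 = \sgn{h_1}$), use detailed balance \eqref{DetailedBalance} to reindex the $h_{i+1}$ terms, and obtain three contributions: a sign-difference piece
\[
E_1 = \sum_{i\geq \tN} Q_i h_i \Bigl[ Q_1 a_i (i+1)^{1+k}(\sgn{h_{i+1}}-\sgn{h_i}) + b_i(i-1)^{1+k}(\sgn{h_{i-1}}-\sgn{h_i}) \Bigr],
\]
which is pointwise $\leq 0$ since $h_i(\sgn{h_{i\pm1}}-\sgn{h_i})\leq 0$; a diagonal piece
\[
E_2 = \sum_{i\geq \tN} Q_i |h_i| \Bigl[ Q_1 a_i \bigl((i+1)^{1+k}-i^{1+k}\bigr) - b_i\bigl(i^{1+k}-(i-1)^{1+k}\bigr) \Bigr];
\]
and a boundary-sign piece $E_3 = \sgn{h_1}\sum_{i\geq \tN} Q_i h_i(b_i - Q_1 a_i)$ that is identical in form to the $X_1$ case. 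By the mean value theorem both forward and backward differences of $i^{1+k}$ equal $(1+k) i^k + O(i^{k-1})$, so the leading order of $E_2$ is $(1+k)i^k (Q_1 a_i - b_i) Q_i |h_i|$, which is bounded above by $-\delta (1+k) a_i i^k Q_i |h_i|$ for $i\geq N_z$ by \eqref{asymptoticallyStrongFragmentation}. The discrepancy between the two differences is $O\bigl(k(1+k)i^{k-1}\bigr)$, and this remainder is absorbed by the leading negative term provided $\tN$ is enlarged (depending on $k$, $\delta$, and $Q_1$). Combining $E_2$ with $E_3$ by the same case analysis on whether $\sgn{h_1}$ agrees with $\sgn{h_i}$ as in Proposition \ref{Prop:ADissipative} then produces a non-positive upper bound on $\langle \sgn{h}, Ah\rangle_{X_{1+k}^*, X_{1+k}}$.

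For step (iii) I would apply Lumer--Phillips exactly as in the proof of Lemma \ref{Lem:AContraction}. Density of $\dom(A)$ in $X_{1+k}$ is clear. For density of the range of $A - \lambda I$, the key observation is that $X_\eta \subset X_{1+k}$ continuously and densely, since $e^{\eta i}$ dominates $i^{1+k}$; by Proposition \ref{Prop:CanizoLSemigroup} together with boundedness of $B$ on $X_\eta$, the operator $A$ restricted to $X_\eta$ generates a semigroup on $X_\eta$, so for $\lambda$ large enough $(A-\lambda I)$ is invertible on $X_\eta$ and the range of $A - \lambda I$ in $X_{1+k}$ contains the dense subspace $X_\eta$. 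Step (iv) is then the standard bounded perturbation theorem applied to $L = A + B$. The main obstacle throughout is the dissipation estimate in step (ii): unlike in the $X_1$ case, where forward and backward first differences of the weight were identically $1$, here the $O(i^{k-1})$ discrepancy between $(i+1)^{1+k}-i^{1+k}$ and $i^{1+k}-(i-1)^{1+k}$ must be controlled by the leading negative term supplied by \eqref{asymptoticallyStrongFragmentation}, which forces the cutoff $\tN$ to be chosen depending on $k$.
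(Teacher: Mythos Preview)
Your proposal is correct and follows essentially the same route as the paper. The paper's one-line proof invokes Lemma~\ref{lem:AgEvolutionFamily} with $g\equiv 0$ together with the bounded perturbation $B$; the content of that lemma (via Proposition~\ref{Prop:L1Dissipation}) is precisely the dissipativity computation you outline in step~(ii), and the Lumer--Phillips argument you give in step~(iii) matches the paper's, except that the paper uses $H$ rather than $X_\eta$ to establish density of the range of $A-\lambda I$ (both embeddings work for the same reason, exponential decay of $Q_i$). One small point: for $k>0$ the combination $E_2+E_3$ is not handled by the exact sign-matching cancellation of Proposition~\ref{Prop:ADissipative}, since $E_2$ now carries the weight increments $(i+1)^{1+k}-i^{1+k}$ while $E_3$ does not; rather, as in the paper's Proposition~\ref{Prop:L1Dissipation}, one simply observes that $|E_3|$ is of order $\sum Q_i|h_i|a_i$ and is absorbed by the negative leading part of $E_2$, which is of order $\sum Q_i|h_i|a_i i^{k}$, once $\tN$ is large enough.
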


\begin{proof}
Applying Lemma \ref{lem:AgEvolutionFamily} when $g \equiv 0$, along with the fact that $B$ is a bounded perturbation gives the desired result.
\end{proof}

With these tool in hand we can establish the following linear decay estimates.

\begin{corollary} \label{Cor:SemigroupPoly}
Provided $0 < m<k$, the semigroup $e^{Lt}$ generated by the operator $L$ satisfies
\[
\|e^{Lt} u\|_{X_{1+m}} \leq C(1+t)^{-(k-m)} \|u\|_{X_{1+k}}
    \qquad\text{for all $u\in X_{1+k}$,}
\]
where $C$ depends on $m$ and $k$, but not on $u$ or $t$.
\end{corollary}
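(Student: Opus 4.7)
The plan is to apply the interpolation theorem (Theorem \ref{SemigroupInterpolation}) directly to the family $S(t) = e^{Lt}$. All the hard analysis — establishing the $X_1$ bound via the extension principle, the $H$/$X_\eta$ spectral gap of Ca\~nizo--Lods, and the interpolation machinery itself — has already been done, so this step amounts to verifying the two hypotheses of Theorem \ref{SemigroupInterpolation} and unwinding the conclusion.

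\textbf{Step 1:} Verify hypothesis (i). By Theorem \ref{Cor:LStableSemigroup}, the closure of $L$ generates a semigroup on $X_1$ with $\|e^{Lt}\|_{\mathcal{L}(X_1)} \le M$. Since $e^{Lt}$ is linear, applying this uniform bound to $u-v$ gives
\[
\|e^{Lt}u - e^{Lt}v\|_{X_1} = \|e^{Lt}(u-v)\|_{X_1} \le M\,\|u-v\|_{X_1},
\]
which is the Lipschitz hypothesis required in Theorem \ref{SemigroupInterpolation}.

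\textbf{Step 2:} Verify hypothesis (ii). By Proposition \ref{Prop:CanizoLSemigroup}, for some $\eta \in (0,1)$ sufficiently small there exist constants $M,\lambda_\eta>0$ with $\|e^{Lt}\|_{\mathcal{L}(X_\eta)} \le M e^{-\lambda_\eta t}$. Fix such an $\eta$; this is precisely the exponential decay required on the smaller space $X_\eta$.

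\textbf{Step 3:} Apply the interpolation theorem and check consistency. With the two hypotheses in hand, Theorem \ref{SemigroupInterpolation} directly yields the polynomial decay
\[
\|e^{Lt}u\|_{X_{1+m}} \le C(1+t)^{-(k-m)}\|u\|_{X_{1+k}}
\]
for all $u \in X_{1+k}$ and $0 < m < k$, with $C$ depending only on $m$, $k$, $M$, and $\lambda_\eta$. The only mild subtlety is that $e^{Lt}$ has been constructed on several different spaces ($X_1$, $X_\eta$, and, via Proposition \ref{Cor:XkSemigroup}, $X_{1+k}$); we need these semigroups to agree on intersections for Theorem \ref{SemigroupInterpolation} to apply coherently. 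This is immediate from the fact that they all arise as closures of the same operator $L$ defined on the common dense core of finitely-supported zero-mass sequences, so by uniqueness they coincide wherever they are simultaneously defined. In particular, since $X_{1+k} \subset X_1$, the quantity $e^{Lt}u$ for $u \in X_{1+k}$ is unambiguous.

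There is no serious obstacle here: the content of the corollary is entirely contained in Theorem \ref{SemigroupInterpolation} together with the two prior propositions. The only point requiring care is ensuring the semigroups on $X_1$, $X_\eta$, and $X_{1+k}$ are identified with one another, which follows from the common-core argument sketched above.
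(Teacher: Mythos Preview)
Your proof is correct and follows essentially the same route as the paper, which simply cites Proposition~\ref{Prop:CanizoLSemigroup}, Theorem~\ref{Cor:LStableSemigroup}, Theorem~\ref{SemigroupInterpolation}, and Proposition~\ref{Cor:XkSemigroup} in one line. Your Step~3 discussion of why the semigroups on $X_1$, $X_\eta$, and $X_{1+k}$ coincide (via the common dense core) is a helpful elaboration that the paper leaves implicit.
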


\begin{proof}
This follows directly from Proposition \ref{Prop:CanizoLSemigroup}, Corollary \ref{Cor:LStableSemigroup}, Theorem \ref{SemigroupInterpolation} and Proposition \ref{Cor:XkSemigroup}.
\end{proof}

Our goal is to use the detailed decay rates in Corollary \ref{Cor:SemigroupPoly}, along with Duhamel's formula, to prove Theorem \ref{Thm:AlgebraicDecay}. We first prove that Duhamel's formula is justified in the appropriate spaces. We begin by recalling a fact from Ball, Carr and Penrose (\cite{BallCarrPenrose}, Proof of Theorem 2.2 and Proposition 2.4).

\begin{proposition}\label{Prop:BCP}
Let $(c_i)$ be a solution to the Becker-D\"oring equations, and let $(h_i)$ be defined by \eqref{hDefinition}. Suppose that $h(0) \in X_{1+k}$, with $k \geq 0$. Then $\|h(t)\|_{X_{1+k}} \leq \|h(0)\|_{X_{1+k}}Ce^{Kt}$ for some $C$ and $K$ independent of $h$.
\end{proposition}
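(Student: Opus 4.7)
The proposition is a moment propagation result in the spirit of the cited sections of \cite{BallCarrPenrose}; my plan is to reduce it to a Gronwall-type bound on the moments $M_p(c(t)) := \sum_i i^p c_i(t)$, and then translate that into a statement about $\|h\|_{X_{1+k}}$ via the decomposition $c_i = Q_i(1+h_i)$.

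First, since $Q_i h_i = c_i - Q_i$, one has $\|h(t)\|_{X_{1+k}} = \sum_i i^{1+k}|c_i(t) - Q_i|$, and the triangle inequality immediately gives
\[
    \|h(t)\|_{X_{1+k}} \leq M_{1+k}(c(t)) + M_{1+k}(Q),\qquad M_{1+k}(c(0)) \leq \|h(0)\|_{X_{1+k}} + M_{1+k}(Q),
\]
where $M_{1+k}(Q) = \sum_i i^{1+k} Q_i$ is finite because $Q_i$ decays exponentially by \eqref{QLimit}. Thus it suffices to bound $M_{1+k}(c(t))$ in terms of $M_{1+k}(c(0))$.

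Second, I would apply the Becker-D\"oring equation in weak form with test function $\phi_i = i^{1+k}$ to obtain
\[
    \frac{d}{dt} M_{1+k}(c(t)) = \sum_i J_i(t)\bigl((i+1)^{1+k} - i^{1+k} - 1\bigr).
\]
Combining the elementary bound $|(i+1)^{1+k} - i^{1+k} - 1| \leq C_k (1+i)^k$, the assumption \eqref{boundedByI}, and the crude estimate $|J_i| \leq a_i c_1 c_i + b_{i+1} c_{i+1}$, one arrives at a differential inequality of the form
\[
    \frac{d}{dt} M_{1+k}(c(t)) \leq K\bigl(1+c_1(t)\bigr) M_{1+k}(c(t)).
\]
Since $c_1(t) \leq \varrho$ by the mass conservation \eqref{OriginalMassConstraint}, Gronwall yields $M_{1+k}(c(t)) \leq e^{K't} M_{1+k}(c(0))$, and combining this with the estimates of the first step produces a bound of the form $\|h(t)\|_{X_{1+k}} \leq C\bigl(\|h(0)\|_{X_{1+k}} + 1\bigr) e^{K't}$.

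The main technical hurdle I anticipate is justifying the termwise manipulations (exchanging sum and derivative, truncation arguments to control the infinite series of fluxes that appears upon testing) required to make the weak-form computation rigorous; these steps are standard but delicate, and constitute the bulk of the cited proofs in \cite{BallCarrPenrose}. A minor additional subtlety is that the stated proposition asserts a bound genuinely linear in $\|h(0)\|_{X_{1+k}}$, whereas the route above produces an additive $+1$; this can be absorbed into the multiplicative constant in any regime where $\|h(0)\|_{X_{1+k}}$ is bounded (which is the only case used in the small-data setting of Theorem \ref{Thm:AlgebraicDecay}), or one can instead work directly with $\sum_i i^{1+k}|c_i(t) - Q_i|$ using the signed test function $i^{1+k}\operatorname{sgn}(c_i - Q_i)$ together with the detailed balance identity $J_i^{\rm eq} = 0$, which eliminates the equilibrium contribution. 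Either route suffices for the subsequent Duhamel argument.
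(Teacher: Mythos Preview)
The paper does not prove this proposition; it simply cites it as a fact from \cite{BallCarrPenrose} (Proof of Theorem~2.2 and Proposition~2.4). Your sketch is precisely the Ball--Carr--Penrose argument: one derives a differential inequality for the moment $M_{1+k}(c(t))$ by testing against $i^{1+k}$, uses \eqref{boundedByI} together with $c_1(t)\le\varrho$ to close the inequality, and then applies Gronwall---with the bulk of the work being the truncation/tail-control needed to make the formal computation rigorous. So your approach and the cited one coincide.

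Your observation about the affine-versus-linear discrepancy is accurate: the moment route naturally yields $\|h(t)\|_{X_{1+k}}\le Ce^{Kt}(\|h(0)\|_{X_{1+k}}+1)$ rather than the homogeneous bound stated. This is harmless for the only use made of the proposition in the paper (Lemma~\ref{Lem:Duhamel}), which requires merely that $\|h(t)\|_{X_{1+k}}$ be bounded on compact time intervals when $h(0)\in X_{1+k}$. Your suggested remedy---testing against $i^{1+k}\sgn{h_i}$ and exploiting the detailed balance identity $a_iQ_1Q_i=b_{i+1}Q_{i+1}$ so that the equilibrium flux drops out---is in the same spirit as the dissipativity computations in Propositions~\ref{Prop:ADissipative} and~\ref{Prop:L1Dissipation}, and would indeed recover the homogeneous form if desired.
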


We now have the tools to justify Duhamel's formula.

\begin{lemma} \label{Lem:Duhamel}
Assume that $(c_i(t))$ is a solution of the Becker-D\"oring equations and $(h_i(t))$ is defined by \eqref{hDefinition}, and let $m \geq 0$. If $h(0) \in X_{3+m}$ then the following is satisfied (strongly) in $X_{1+m}$:
\begin{equation} \label{Eqn:hStrongSolution}
\frac{d}{dt} h = Lh + h_1 \Gamma h.
\end{equation}
In particular, if $h(0) \in X_{3+m}$ then we have that the following is satisfied in $X_{1+m}$:
\begin{equation}\label{Eqn:DuhamelSatisfied}
h(t) = e^{Lt} h(0) + \int_0^t e^{L(t-s)} h_1(s) \Gamma h(s) \ds,
\end{equation}
where $e^{Lt}$ is the semigroup generated by $L$ on $X_{1+m}$ (see Proposition \ref{Cor:XkSemigroup}).
\end{lemma}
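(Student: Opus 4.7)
The plan is first to derive the strong differential equation \eqref{Eqn:hStrongSolution} in $X_{1+m}$, and then to obtain the Duhamel representation \eqref{Eqn:DuhamelSatisfied} by standard semigroup arguments. The key a priori estimate is Proposition \ref{Prop:BCP} applied at the higher weight: since $h(0)\in X_{3+m}$, one gets $\|h(t)\|_{X_{3+m}}\le Ce^{Kt}\|h(0)\|_{X_{3+m}}$, and Lemma \ref{Lemma:GammaEstimate} then gives $\|Lh(t)\|_{X_{1+m}} + \|\Gamma h(t)\|_{X_{1+m}} \le C_m\|h(t)\|_{X_{2+m}}$; together with the elementary bound $|h_1(t)|\le Q_1^{-1}\|h(t)\|_{X_1}$, this yields uniform control of $Lh + h_1\Gamma h$ in $X_{1+m}$ on every compact time interval.

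For the strong form, I would start from the component-wise integral equations of Definition \ref{Def:Solution}, which after substituting $c_i=Q_i(1+h_i)$ and dividing by $Q_i$ translate, for each $i$, into
\[
h_i(t)-h_i(s) = \int_s^t \bigl(Lh(r)+h_1(r)\Gamma h(r)\bigr)_i\,dr.
\]
Multiplying by $Q_i i^{1+m}$ and summing over $i\le N$ gives an identity whose limit as $N\to\infty$ realizes the desired equation as an identity of Bochner integrals in $X_{1+m}$; the tail estimate $\sum_{i>N} Q_i i^{1+m}|(Lh(r)+h_1(r)\Gamma h(r))_i| \le C(1+|h_1(r)|)\sum_{i>N}Q_i i^{2+m}|h_i(r)|$, obtained by imitating Lemma \ref{Lemma:GammaEstimate} on the tail, provides the dominated-convergence majorant needed to interchange the time integral with the infinite sum (using $\|h(r)\|_{X_{3+m}}$ to force the tails to vanish uniformly in $r$). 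Continuity of $r\mapsto Lh(r)+h_1(r)\Gamma h(r)$ in $X_{1+m}$ follows from pointwise continuity of each $h_i$ combined with the same tail control, so the vector-valued fundamental theorem of calculus yields \eqref{Eqn:hStrongSolution} strongly in $X_{1+m}$.

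With \eqref{Eqn:hStrongSolution} in hand, \eqref{Eqn:DuhamelSatisfied} is classical. Since $L$ generates a semigroup on $X_{1+m}$ (Proposition \ref{Cor:XkSemigroup}) and $f(r):=h_1(r)\Gamma h(r)$ lies in $C([0,T];X_{1+m})$, one differentiates $r\mapsto e^{L(t-r)}h(r)$ to get $e^{L(t-r)}f(r)$ and integrates from $0$ to $t$; this is standard (see e.g.\ Corollary 4.2.5 in \cite{PazyBook}). The main technical obstacle will be justifying the passage $N\to\infty$ that upgrades the componentwise Definition \ref{Def:Solution} to a norm-strong equation in $X_{1+m}$. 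The fact that Proposition \ref{Prop:BCP} lets us propagate regularity two full moments above the target space is precisely the slack needed to absorb the one-moment loss from $L$ and $\Gamma$ quantified in Lemma \ref{Lemma:GammaEstimate}, together with the extra factor from the $h_1$ multiplier.
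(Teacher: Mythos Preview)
Your proposal is correct and follows essentially the same route as the paper: use Proposition~\ref{Prop:BCP} at weight $3+m$ together with Lemma~\ref{Lemma:GammaEstimate} to control $Lh+h_1\Gamma h$, upgrade the coordinatewise equations of Definition~\ref{Def:Solution} to a Bochner-integral identity, differentiate, and then invoke a standard mild-solution result from \cite{PazyBook}. The only organizational difference is that the paper first establishes the integrated identity in $X_{2+m}$ (so that $h$ is Lipschitz there, and continuity of the integrand in $X_{1+m}$ follows immediately from the boundedness of $L,\Gamma:X_{2+m}\to X_{1+m}$), whereas you argue continuity in $X_{1+m}$ directly via tail estimates; and the paper cites Corollary~4.2.2 rather than 4.2.5 for the Duhamel step.
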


\begin{proof}
Because $h(0) \in X_{3+m}$ by Proposition \ref{Prop:BCP} and Lemma \ref{Lemma:GammaEstimate} we have that $Lh + h_1 \Gamma h$ is bounded in $X_{2+m}$ on any finite interval. Because each $h_i$ is continuous by definition \eqref{Def:Solution}, it must be that $Lh + h_1 \Gamma h$ is measurable in $X_{2+m}$. We claim that in $X_{2+m}$ we have that
\begin{equation} \label{Eqn:IntegratedForm}
h(t) = h(0) + \int_0^t Lh(s) + h_1(s) \Gamma h(s) \ds.
\end{equation}
Indeed, the right hand side of the equation is well-defined, and must match the coordinate-wise integrals from definition \ref{Def:Solution}. This implies that $h(t)$ is locally Lipschitz in $X_{2+m}$. As \eqref{Eqn:IntegratedForm} also holds in $X_{1+m}$ we thus have that $h(t)$ must be differentiable in $X_{1+m}$. This implies \eqref{Eqn:hStrongSolution}.

Again by Proposition \ref{Prop:BCP} we know that $h_1\Gamma h \in L^1((0,T);X_{1+m})$. Corollary 4.2.2 in \cite{PazyBook} then implies \eqref{Eqn:DuhamelSatisfied}.
\end{proof}

In deriving uniform bounds we will need a specialized version of Gronwall's inequality.

\begin{lemma}\label{Lemma:Gronwall}
Let $u(t)$ be a positive, continuous function on $[0,\infty)$. Suppose that $u$ satisfies
\begin{equation} \label{Eqn:uInequality}
u(t) \leq C_2(1+t)^{-r} + \int_0^t C_1(1+t-s)^{-r} u(s) ds.
\end{equation}
Furthermore, suppose that $r>1$ and that $C_1$ is small enough that
\begin{equation} \label{Eqn:GronwallConstants}
C_1\int_0^t (1+t-s)^{-r}(1+s)^{-r} \ds \leq \theta(1+t)^{-r}
\end{equation}
for some $\theta<1$ and for all $t>0$. Then we must have that
\[
u(t) \leq \frac{C_2}{1-\theta} (1+t)^{-r}.
\]
\end{lemma}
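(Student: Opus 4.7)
\medskip

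\noindent\textbf{Proof proposal.} The natural approach is a bootstrap on the quantity
\[
M(T) := \sup_{0 \le s \le T} (1+s)^r u(s),
\]
which is finite for each finite $T$ because $u$ is continuous on $[0,\infty)$. The idea is to feed the pointwise bound $u(s) \le M(T)(1+s)^{-r}$, valid for $s \in [0,T]$, back into the integral inequality \eqref{Eqn:uInequality} and use the smallness hypothesis \eqref{Eqn:GronwallConstants} to absorb the self-referential term.

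Concretely, I would first fix an arbitrary $T > 0$ and, for any $t \in [0,T]$, substitute $u(s) \le M(T)(1+s)^{-r}$ into the integral on the right-hand side of \eqref{Eqn:uInequality}. Applying hypothesis \eqref{Eqn:GronwallConstants} (with $t$ in place of the dummy variable) yields
\[
u(t) \le C_2(1+t)^{-r} + M(T)\,\theta\,(1+t)^{-r}.
\]
Multiplying by $(1+t)^r$ and taking the supremum over $t \in [0,T]$ gives the self-improving bound
\[
M(T) \le C_2 + \theta\, M(T),
\]
and since $\theta < 1$ and $M(T)$ is finite, this rearranges to $M(T) \le C_2/(1-\theta)$. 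As the constant on the right is independent of $T$, letting $T \to \infty$ produces the claimed bound
\[
u(t) \le \frac{C_2}{1-\theta}(1+t)^{-r} \quad \text{for all } t \ge 0.
\]

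I do not anticipate any serious obstacle: the only point requiring care is that the sup is taken over a closed finite interval, so that $M(T)$ is a priori finite and the bootstrap closes legitimately, after which uniformity in $T$ delivers the global bound. The hypothesis $r > 1$ is used only through \eqref{Eqn:GronwallConstants} (which is where integrability of the convolution kernel enters), so it does not need to be invoked separately in the argument above.
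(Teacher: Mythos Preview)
Your proof is correct and is essentially identical to the paper's: the paper sets $v(t)=(1+t)^r u(t)$ and bounds $\|v\|_{C(0,T)}$ by $C_2+\theta\|v\|_{C(0,T)}$, which is exactly your bootstrap on $M(T)$. Your remark that finiteness of $M(T)$ on compact intervals is the only delicate point is apt and matches the paper's implicit use of continuity.
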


\begin{proof}
Let $v(t) = u(t)(1+t)^{r}$. Then we have that
\[
v(t) \leq C_2 + (1+t)^r\int_0^t C_1 (1+t-s)^{-r}(1+s)^{-r} v(s) \ds.
\]
This then readily implies that for any $T>0$,
\[
\|v\|_{C(0,T)} \leq C_2 + \theta \|v\|_{C(0,T)}.
\]
Thus for all $t\geq 0$
\[
v(t) \leq \frac{C_2}{1-\theta},
\]
which establishes the desired result.

%
\end{proof}

\begin{remark} \label{Rem:Gronwall}
We note that for any $r>1$ we can find a $C_1>0$ such that \eqref{Eqn:GronwallConstants} is satisfied. This is because
\begin{align}
\int_0^t (1+s)^{-r} (1+t-s)^{-r} \ds &= 2 \int_0^{t/2} (1+s)^{-r}(1+t-s)^{-r} \ds\\
&\leq 2\left(1+\frac{t}{2}\right)^{-r} \int_0^{t/2} (1+s)^{-r} \ds  \\
&\leq \frac{2^{r+1}}{r-1} (1+t)^{-r}.
\end{align}
Thus if $C_1<(r-1)2^{-(r+1)}$ then we have that \eqref{Eqn:GronwallConstants} is satisfied.
\end{remark}

\begin{remark}
The dependence on the constant $C_1$ is critical in the previous proof. Indeed, if $\int_0^\infty C_1 (1+s)^{-r} ds > 1$ then it is possible to show that for some  $u(t) \equiv c>0$ the inequality \eqref{Eqn:uInequality} is satisfied. Thus decay estimates can only be obtained if $C_1$ is sufficiently small.
\end{remark}

The last tool that we need is a local stability estimate. The proof of this estimate is somewhat involved, and we postpone it until the next section.

\begin{theorem} \label{Lem:LyapunovStability}
Let $(c_i)$ be a solution to the Becker-D\"oring equations (see Definition \ref{Def:Solution}), and let $(h_i)$ be determined by \eqref{hDefinition}. Assume that the model coefficients in \eqref{OriginalFluxDefinition} satisfy \eqref{aLowerBound}-\eqref{boundedByI}. Fix $k > 2$. Then given any $\e>0$ there exists $\delta>0$ such that if $\|h(0)\|_{X_{1+k}}<\delta$ then $\|h(t)\|_{X_{1+k}} < \e$ for all $t \geq 0$.
\end{theorem}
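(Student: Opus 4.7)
The plan is a continuity/bootstrap argument in the norm $V(t) := \|h(t)\|_{X_{1+k}}$. By Proposition \ref{Prop:BCP}, $V$ is continuous in $t$. Given $\varepsilon > 0$, define $T := \sup\{t \geq 0 : V(s) < \varepsilon \text{ for all } s \in [0,t)\}$. I aim to choose $\delta$ small enough so that $V(0) < \delta$ forces $T = \infty$. If $T < \infty$, continuity forces $V(T) = \varepsilon$, and I will derive a contradiction by showing $\dot V(t) \leq 0$ on $[0,T)$. The crucial observation is that throughout $[0,T)$ we have $|h_1(t)| \leq V(t)/Q_1 < \varepsilon/Q_1$, so the coefficient $g = h_1(t)$ in $F(g) = L + g\Gamma$ is uniformly small.

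The heart of the argument is a nonlinear dissipation estimate: for $|g|$ sufficiently small,
\[
\langle \mathrm{sgn}(h), F(g) h \rangle_{X_{1+k}^*, X_{1+k}} \leq 0
\qquad \text{for all zero-mass } h.
\]
I would prove this by mirroring the proof of Proposition \ref{Prop:ADissipative}, now with weight $w_i = i^{1+k}$ replacing $i$, and with the $g h_i$ term incorporated into the high-index part of a decomposition $F(g) = \widetilde A(g) + \widetilde B(g)$ analogous to $L = A + B$. Taking $\phi_i = i^{1+k}\,\mathrm{sgn}(h_i)$ in the weak form (justified by smooth approximation of the sign), performing Abel summation, and invoking the detailed-balance identity $Q_i Q_1 a_i = Q_{i+1} b_{i+1}$, the pairing decomposes into three kinds of contributions analogous to $E_1, E_2, E_3$ of Proposition \ref{Prop:ADissipative}: (i) sign-discrepancy terms manifestly of non-positive sign; (ii) diagonal terms carrying the factor $a_i Q_1 (1+g) - b_i$, which by assumption \eqref{asymptoticallyStrongFragmentation} is $\leq -\tfrac{\delta}{2} a_i < 0$ for $i \geq N_z$ once $|g| < \delta/(2 Q_1)$; and (iii) a finitely-supported remainder absorbed into $\widetilde B(g)$, which is bounded on $X_{1+k}$ and controlled directly.

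Granted the dissipation estimate, the bootstrap closes: choosing $\varepsilon < Q_1 \delta/(2 Q_1 \cdot (\text{something}))$ so that the smallness condition $|h_1(t)| < \delta/(2Q_1)$ holds throughout $[0,T)$, we get $\dot V(t) \leq 0$, whence $V(t) \leq V(0) < \delta \leq \varepsilon$, contradicting $V(T) = \varepsilon$ unless $T = \infty$. The main obstacle I expect is the Abel-summation bookkeeping with the growing weight $i^{1+k}$: unlike the $X_1$ case, the increments $(i+1)^{1+k} - i^{1+k} \sim (1+k) i^k$ generate correction terms that must be shown to group with the dissipative $b_i$ contributions rather than the anti-dissipative $a_iQ_1$ contributions. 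The hypothesis $k > 2$ appears to be what provides enough room for these corrections to be absorbed cleanly, while the nonlinear $\Gamma$-term's formal loss of a power of $i$ is compensated by the small factor $|h_1|$ coming from the bootstrap.
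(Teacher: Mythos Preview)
Your central claim---that the full operator $F(g)$ is dissipative on $X_{1+k}$ for small $|g|$---does not hold, and this is precisely why the paper takes a more circuitous route. The decomposition $F(g)=\widetilde A(g)+\widetilde B(g)$ is fine, and the tail part $\widetilde A(g)$ is indeed dissipative (this is the paper's Proposition~\ref{Prop:L1Dissipation}). But $\widetilde B(g)$ contains the $h_1$-coupling term
\[
h_1\sum_{i=1}^\infty Q_iQ_1a_i(\phi_{i+1}-\phi_i-\phi_1),
\]
whose pairing with $\phi_i=i^{1+k}\,\mathrm{sgn}(h_i)$ is bounded only by $C|h_1|$, with no useful sign. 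The dissipation you extract from $\widetilde A(g)$ is $-C\sum_{i\ge N}Q_i a_i(w_{i+1}-w_i)|h_i|$, which lives entirely at large indices and cannot absorb a term of size $|h_1|$; there is no mechanism to force $\langle\mathrm{sgn}(h),F(g)h\rangle\le 0$. Saying $\widetilde B(g)$ is ``bounded on $X_{1+k}$ and controlled directly'' is the gap: bounded is not enough, and there is nothing to control it by.

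The paper's proof resolves this by never claiming dissipativity of $F(g)$ in $X_{1+k}$. Instead it proves a spectral-gap estimate for $F(g)$ in the Hilbert space $H$ (Lemma~\ref{EvolutionInH}), shows $\widetilde B(g)$ is regularizing from $X_{1+k}$ into $H$, and then invokes the extension principle (Proposition~\ref{PropMischlerMouhot}) to transfer the $H$-decay to a uniform bound $\|U_{X_{1+k}}(t,s)\|\le M_k$ in $X_{1+k}$---a bound, not a contraction. A continuity argument then closes. Your reading of the hypothesis $k>2$ is also off: it is not there to ``provide room'' in the Abel summation, but to ensure (via Lemma~\ref{Lem:Duhamel}) that $h_1\in C^1$, which is required by the evolution-family machinery (Propositions~\ref{NLCPFixedDomain} and~\ref{PazyNLCP2}).
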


With these tools in hand we now prove the main result of this paper.

\begin{proof}[Proof of Theorem \ref{Thm:AlgebraicDecay}]
By Lemma \ref{Lem:Duhamel} we know that the equation
\begin{equation}
h(t) = e^{Lt} h(0) + \int_0^t e^{L(t-s)} h_1(s) \Gamma h(s) \ds
\end{equation}
is satisfied in $X_{1+m}$, where $e^{Lt}$ is the semigroup generated by $L$. By Corollary \ref{Cor:SemigroupPoly} we can thus estimate
\[
\|h(t)\|_{X_{1+m}} \leq C(1+t)^{-(k-m)}\|h(0)\|_{X_{1+k}} + C_L(k,m)\int_0^t \|h_1(s) \Gamma h(s)\|_{X_{k}} (1+t-s)^{-(k-m-1)} \ds.
\]
By Lemma \ref{Lemma:GammaEstimate} we know that $\Gamma$ is bounded from $X_{k+1}$ to $X_{k}$, and thus
\[
\|h(t)\|_{X_{1+m}} \leq C(1+t)^{-(k-m)}\|h(0)\|_{X_{1+k}} + C_L(k,m)C_\Gamma\int_0^t \|h_1(s) h(s)\|_{X_{1+k}} (1+t-s)^{-(k-m-1)} \ds.
\]
It is then immediate that
\[
\|h(t)\|_{X_{1+m}} \leq C(1+t)^{-(k-m)}\|h(0)\|_{X_{1+k}} + C \sup_\tau \|h(\tau)\|_{X_{1+k}} \int_0^t (1+t-s)^{-(k-m-1)} |h_1(s)| \ds.
\]
We then use a crude bound to obtain
\[
\|h(t)\|_{X_{1+m}} \leq C(1+t)^{-(k-m)}\|h(0)\|_{X_{1+k}} + C \sup_\tau \|h(\tau)\|_{X_{1+k}} \int_0^t (1+t-s)^{-(k-m-1)} \|h(s)\|_{X_{1+m}} \ds.
\]
By Lemma \ref{Lem:LyapunovStability} for any $\e>0$ we can choose $\delta_{k,m}$ small enough to guarantee that
\[
\|h(t)\|_{X_{1+m}} \leq C(1+t)^{-(k-m)}\|h(0)\|_{X_{1+k}} + \e \int_0^t (1+t-s)^{-(k-m-1)} \|h(s)\|_{X_{1+m}} \ds.
\]
As $k > m+2$, by applying Lemma \ref{Lemma:Gronwall} (whose conditions will be satisfied for $\e$ small due to Remark \ref{Rem:Gronwall}), we then find that
\[
\|h(t)\|_{X_{1+m}} \leq C(1+t)^{-(k-m-1)}\|h(0)\|_{X_{1+k}},
\]
which is the desired result.
\end{proof}

\section{Local Stability Bounds in $X_{1+k}$} \label{Sec:NonLinear}

In this section our goal is to prove Theorem \ref{Lem:LyapunovStability}. The general strategy is to derive bounds on the evolution family $U(t,s)$ generated by $F(h_1(t))$ when $h_1$ is small. We first establish bounds in $H$ directly using dissipation estimates. We then establish stability bounds on $U(t,s)$ in $X_{1+k}$ by using the extension principle from Proposition \ref{PropMischlerMouhot}. This then immediately implies Theorem \ref{Lem:LyapunovStability}.

\subsection{Non-linear stability in $H$}

The following lemma gives a local, non-linear stability estimate in the space $H$.

\begin{lemma}\label{EvolutionInH}
Suppose that $g(t) \in C^1(I;\R)$, for $I = [0,T)$ with $T$ possibly infinite. Suppose furthermore that the model coefficients in \eqref{OriginalFluxDefinition} satisfy \eqref{aLowerBound}-\eqref{boundedByI}. Then there exist $\delta_H$ and $\lambda > 0$ such that if $|g(t)| < \delta_H$ then $\set{F(g(t))}_{t\in I}$ generates an evolution family $U_H$ in $H$ on the interval $I$ with bound
\[	
\|U_H(t,s)\|_{\mathcal{L}(H)} \leq e^{-\lambda (t-s)} \qquad\text{ for  } 0\le s\le t<T.
\]
\end{lemma}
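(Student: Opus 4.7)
The plan is to derive an energy inequality of the form $\frac{d}{dt}\|U_H(t,s)h\|_H^2\le -2\lambda\|U_H(t,s)h\|_H^2$ and conclude the claimed decay bound by Gronwall's inequality. The key ingredient will be a coercivity estimate $\langle h,F(g)h\rangle_H\le -\lambda\|h\|_H^2$, uniform in $h$, valid whenever $|g|$ is below some threshold $\delta_H$.

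First I would use the weak form of $F(g)$ with test function $\phi=h$ to derive the identity
\begin{equation*}
\langle h,F(g)h\rangle_H=-\sum_{i=1}^\infty\omega_iA_i^2-g\sum_{i=1}^\infty\omega_iA_ih_i,
\end{equation*}
where $\omega_i:=a_iQ_iQ_1$ and $A_i:=h_1+h_i-h_{i+1}$. The first term equals $\langle h,Lh\rangle_H$, and is bounded above by $-\lambda_c\|h\|_H^2$ by the Ca\~nizo--Lods spectral gap \eqref{Eqn:L2Gap}. For the cross term I would apply Young's inequality with a parameter $\alpha>0$ to split it into an $\alpha\sum\omega_iA_i^2$ piece (absorbed by the dissipation) and a $(g^2/\alpha)\sum\omega_ih_i^2$ piece (to be controlled separately).

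The main obstacle will be a weighted discrete Hardy-type inequality on the zero-mass subspace, namely $\sum_i\omega_ih_i^2\le C_H\sum_i\omega_iA_i^2$. This is nontrivial because $\omega_i=a_iQ_iQ_1$ is strictly heavier than the $H$-weight $Q_i$ (by the possibly linearly-growing factor $a_i$, see \eqref{boundedByI}), so the estimate cannot follow from the mere $H$-coercivity of $L$ alone. To prove it, I would use the telescoping identity $h_i=ih_1-\sum_{j<i}A_j$ together with the expression for $h_1$ in terms of the $A_j$ forced by the mass constraint \eqref{hMassConstraint}, and exploit the exponential decay of $Q_i$ (cf.\ \eqref{QLimit}) so that the relevant tail sums are summable against $\omega_j^{-1}$ in the right sense, closing a Cauchy--Schwarz estimate.

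Once this weighted Hardy inequality is in hand, optimizing $\alpha$ gives $\langle h,F(g)h\rangle_H\le -\lambda\|h\|_H^2$ for all $|g|<\delta_H$ with $\delta_H$ sufficiently small and some $\lambda>0$. For generation of the evolution family $U_H(t,s)$ I would invoke standard perturbation results for time-dependent generators (Pazy \cite{PazyBook}, Ch.\ 5), using that $L$ generates a $C_0$-semigroup on $H$ by Proposition \ref{Prop:CanizoLSemigroup}, that $g\in C^1(I;\R)$ is uniformly small, and that the same weighted Hardy inequality makes $\Gamma$ a relatively $L$-bounded perturbation on $H$ with relative bound strictly less than one. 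Gronwall's inequality applied to the coercivity estimate then produces the decay bound stated in the lemma.
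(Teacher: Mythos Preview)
Your overall architecture---establish the uniform coercivity $\langle F(g)h,h\rangle_H\le-\lambda\|h\|_H^2$ for $|g|$ small, then feed this into Pazy's results on evolution families with fixed domain to obtain the decay bound---is exactly the paper's strategy. The difference is in how you propose to obtain the coercivity, and there the proposal has a real gap.

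After your Young step the burden falls on the weighted Hardy inequality $\sum_i\omega_ih_i^2\le C_H\sum_i\omega_iA_i^2$ with $\omega_i=a_iQ_iQ_1$. You propose to prove it by writing $h_i=ih_1-\sum_{j<i}A_j$, eliminating $h_1$ via the mass constraint, and closing with Cauchy--Schwarz and the exponential decay of $Q_i$. This does not work as stated. After the substitution you must control $\sum_i\omega_i\bigl(\sum_{j<i}A_j\bigr)^2$, and any Cauchy--Schwarz splitting leads to the Muckenhoupt-type quantity $\bigl(\sum_{i\ge n}\omega_i\bigr)\bigl(\sum_{j<n}\omega_j^{-1}\bigr)$. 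Since $\omega_j^{-1}\sim (a_jQ_j)^{-1}$ is dominated by its last term while $\sum_{i\ge n}\omega_i\sim\sum_{i\ge n}a_iQ_i$ picks up the polynomial growth of $a_i$, this product behaves like $n/a_n$, which is unbounded for the physically relevant case $a_i\sim i^\alpha$ with $\alpha<1$. The tail sums are therefore \emph{not} summable against $\omega_j^{-1}$ in the way you need, and the Cauchy--Schwarz does not close. The mass constraint removes the null direction $h_i=i$ but does not by itself repair this divergence.

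The paper avoids the Hardy inequality altogether. It invokes the Ca\~nizo--Lods decomposition $L=T+S+K$ with $T$ diagonal, $S$ symmetric and $T$-bounded with bound $\theta<1$, and $K$ compact. Writing $\langle F(g)h,h\rangle_H=(1-\e)\langle Lh,h\rangle_H+\e\langle Kh,h\rangle_H+\langle(g\Gamma+\e(T+S))h,h\rangle_H$, it uses the spectral gap on the first piece, boundedness of $K$ on the second, and then extracts from the third a strictly negative diagonal term $-\e\frac{1-\theta}{2}\sum_iQ_i\sigma_ih_i^2$ (via a Kato relative-bound argument). Since $\sigma_i=Q_1a_i+b_i\ge Q_1a_i$, this negative term directly absorbs the bound $|g\langle\Gamma h,h\rangle_H|\le C|g|\sum_iQ_ia_ih_i^2$ for $|g|$ small. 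In effect the paper \emph{borrows} the weighted $\ell^2(Q_ia_i)$ control from the structure of $L$ itself rather than proving it as a standalone Hardy estimate; this is what you are missing.
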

%

In order to prove this lemma, we will use the following proposition from Pazy (Corollary 5.4.7 and Theorem 5.4.8 in \cite{PazyBook}).

\begin{proposition}\label{NLCPFixedDomain}
Let $X$ be a Banach space and let $I = [0,T)$, with $T=\infty$ permitted. Suppose that, for any fixed $t\in I$, $C(t)$ is the generator of a semigroup $\set{S_{C(t)}(s)}_{s\ge0}$ which satisfies
\[
    \|S_{C(t)}(s)\|_{{\mathcal L}(X)} \leq e^{-\lambda s} \quad\text{for all $s\ge0$},
\] 
where $\lambda$ is independent of $t$. Also suppose that $\dom(C(t)) \equiv D$ is independent of $t$ and that for all $x \in D$ we have that $C(t)x$ is $C^1$ in $X$. Then the family of operators $\set{C(t)}_{t\in I}$ generates an evolution family $U$ on $X$ which satisfies
\begin{equation}
\|U(t,s)\|_{{\mathcal L}(X)} \leq e^{-\lambda(t-s)}
    \qquad\text{ for } 0\le s\le t<T. 
\end{equation}
Furthermore for $x_0 \in D$ we have that $x(t) := U(t,0)x_0$ is the unique solution of the non-autonomous Cauchy problem
\[
\frac{d}{dt} x(t) = C(t) x(t), \qquad x(0) = x_0.
\]
\end{proposition}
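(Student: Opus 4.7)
The plan is to reduce the proposition to the classical Kato--Pazy theory of evolution families for stable families of generators (Pazy, Chapter 5) and to extract the exponential decay bound from the product-formula description of $U(t,s)$. The hypotheses of the proposition are tailored to fit Kato's theorem, so the main work is to verify the stability condition and track the norm constant through the construction.

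First, I would observe that the assumed bound $\|S_{C(t)}(s)\|_{\mathcal{L}(X)} \le e^{-\lambda s}$ means that each rescaled operator $C(t)+\lambda I$ generates a contraction semigroup, so for any finite sequence $0\le t_1\le t_2\le\cdots\le t_n<T$ and any nonnegative $s_1,\dots,s_n$,
\[
\bigl\|S_{C(t_n)}(s_n)\cdots S_{C(t_1)}(s_1)\bigr\|_{\mathcal{L}(X)} \;\le\; \prod_{k=1}^n e^{-\lambda s_k} \;=\; e^{-\lambda(s_1+\cdots+s_n)}.
\]
This is Kato stability with constant $M=1$ and growth bound $-\lambda$. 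Combined with the standing assumptions that $\dom C(t)\equiv D$ is independent of $t$ and that $t\mapsto C(t)x$ is $C^1$ into $X$ for every $x\in D$, we are exactly in the setting of Kato's theorem for non-autonomous problems with time-independent domain (Pazy, Theorem 5.4.8). That theorem yields a unique evolution family $U(t,s)$ on $X$ which leaves $D$ invariant and satisfies both the forward equation $\partial_t U(t,s)x_0 = C(t)U(t,s)x_0$ and the backward equation $\partial_s U(t,s)x_0 = -U(t,s)C(s)x_0$ for $x_0\in D$. This gives existence of the evolution family as well as the final uniqueness-and-regularity statement.

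To obtain the norm bound, I would approximate $U(t,s)$ by the chronological product
\[
U_\pi(t,s) \;:=\; S_{C(\tau_{n-1})}(\tau_n-\tau_{n-1})\cdots S_{C(\tau_1)}(\tau_2-\tau_1)\, S_{C(\tau_0)}(\tau_1-\tau_0)
\]
associated to a partition $\pi\colon s=\tau_0<\tau_1<\cdots<\tau_n=t$ of $[s,t]$. By the Kato-stability estimate above, $\|U_\pi(t,s)\|_{\mathcal{L}(X)} \le e^{-\lambda(t-s)}$ uniformly in $\pi$. As the mesh of $\pi$ tends to zero, the standard argument in Pazy (Corollary 5.4.7 together with the proof of Theorem 5.4.8) shows that $U_\pi(t,s)x\to U(t,s)x$ strongly for $x\in D$. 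Since the approximants are uniformly bounded and $D$ is dense in $X$, the convergence extends to all of $X$ and the exponential bound transfers in the limit, yielding $\|U(t,s)\|_{\mathcal{L}(X)} \le e^{-\lambda(t-s)}$.

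The principal technical obstacle is the strong convergence of the products $U_\pi(t,s)$ to the evolution family, which is the heart of Kato's construction and relies essentially on the $C^1$ regularity of $t\mapsto C(t)x$ to control commutator errors between neighbouring factors. Since this is precisely the content of Pazy's Theorem 5.4.8, the cleanest presentation is to invoke that result directly for existence and regularity, and then append the elementary Kato-stability product estimate to pick up the decay rate $e^{-\lambda(t-s)}$.
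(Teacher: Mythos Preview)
Your proposal is correct and matches the paper's approach: the paper does not prove this proposition at all but simply states that it ``is a proposition from Pazy (Corollary 5.4.7 and Theorem 5.4.8 in \cite{PazyBook}),'' which is exactly the source you invoke. Your write-up just makes explicit the Kato-stability product estimate and the product-formula convergence that underlie those results.
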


Given fixed $N$, we define $T$ to be a diagonal operator given by
\begin{equation}
(Th)_i = -\sigma_i h_i\,, \qquad \sigma_i := Q_1 a_i + b_i\,,\label{Def:SigmaI}
\end{equation}
define $S$ to be the operator
\begin{equation}
(Sh)_i := b_ih_{i-1}\bbone_{\{i>N+1\}} + a_iQ_1 h_{i+1} \bbone_{\{i>N\}}. 
\end{equation}
and $K:= L-T-S$.   In the proof of Lemma \ref{EvolutionInH} we will use the following facts (see Proposition 2.10 and Corollary 2.11 in \cite{CanizoLods}).

\begin{proposition}\label{Prop:CanizoFacts}
Assuming \eqref{aLowerBound}-\eqref{boundedByI}, the operator $L$ given by \eqref{Def:L} satisfies the following properties:
\begin{enumerate}
\item $L$ is self-adjoint in $H$, with $\dom(L) = \dom(T)=\ell^2(Q_i\sigma_i)$.
\item For some $\lambda_c >0$ we have that 
$\langle h,Lh \rangle_H \leq -\lambda_c\|h\|_H^2$ for all $h\in\dom(L)$.
\item $L=T+S+K$,  $K$ is compact on $H$, and for $N$ large enough, 
$S$ is symmetric and satisfies $\|Sh\|_H \leq \theta \|Th\|_H$ for all $h\in\dom(T)$, where $\theta < 1$.
\end{enumerate}

\end{proposition}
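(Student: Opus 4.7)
The plan is to verify the three claims of Proposition \ref{Prop:CanizoFacts} organized around the decomposition $L = T + S + K$. First I would derive the strong form of $L$ from the weak definition \eqref{Def:L} by collecting the coefficient of each $\phi_j$, using detailed balance $Q_i Q_1 a_i = Q_{i+1} b_{i+1}$ to rewrite terms that pair $\phi_{i+1}$ with weight $Q_i Q_1 a_i$. This yields, for $j \geq 2$, $(Lh)_j = b_j h_{j-1} - (Q_1 a_j + b_j)h_j + Q_1 a_j h_{j+1} + R_j(h_1)$, where $R_j$ collects the contributions coming from the $-\phi_1$ in the weak form. The diagonal piece defines $T$, the two near-neighbor shifts for $j > N$ define $S$, and $K$ collects both the $R_j$-terms and the low-index corrections at $j \leq N$. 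Since the $R_j$-contributions are $h_1$ times a fixed sequence with finite $\ell^2(Q_i)$-norm (bounded using \eqref{boundedByI} and \eqref{QLimit}), they define a rank-one map into $H$; combined with the finite-rank truncation correction at $j \leq N$, this makes $K$ Hilbert--Schmidt, in particular compact.

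Next I would verify that $S$ is symmetric and $T$-bounded with relative bound $\theta < 1$ when $N$ is large. Symmetry is a direct index-shift: $\langle Sh,g\rangle_H = \sum_{i>N} Q_i (b_i h_{i-1} + Q_1 a_i h_{i+1}) g_i$ rewrites to $\langle h, Sg \rangle_H$ after reindexing via $Q_i b_i = Q_{i-1} Q_1 a_{i-1}$. For the relative bound, I expand $\|Sh\|_H^2 = \sum_{i>N} Q_i (b_i h_{i-1} + Q_1 a_i h_{i+1})^2$, reindex each summand so $h$ is evaluated at the same index as the outer weight, and apply detailed balance so that every term becomes a constant times $Q_j \sigma_j^2 h_j^2$ with $\sigma_j = Q_1 a_j + b_j$. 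The quantitative input that forces the ratio strictly below $1$ is \eqref{asymptoticallyStrongFragmentation}, $(Q_1 + \delta) a_i \leq b_i$ for $i > N_z$, combined with \eqref{aLimit} and \eqref{QLimit}; an AM--GM-type bound then makes the shifted weights strictly dominated by $\sigma_j^2$ for $N$ large.

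With the decomposition in hand, the remaining claims follow from standard perturbation theory and the structure of the weak form. The diagonal operator $T$ is self-adjoint on its maximal domain $\dom(T) = \ell^2(Q_i \sigma_i)$. The weak form \eqref{Def:L} is manifestly symmetric under $h \leftrightarrow \phi$, so $L$ is symmetric on its initial domain, and thus $K = L - T - S$ is symmetric. By the Kato--Rellich theorem, $T + S$ is self-adjoint on $\dom(T)$, and adding the bounded symmetric $K$ preserves self-adjointness, proving item (1). For item (2), the weak form directly gives $\langle h, Lh \rangle_H = -\sum_{i \geq 1} a_i Q_i Q_1 (h_1 + h_i - h_{i+1})^2 \leq 0$. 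The null sequences of this dissipation functional satisfy the recursion $h_{i+1} = h_i + h_1$, i.e.\ $h_i = i h_1$, and the mass constraint $\sum Q_i i h_i = 0$ then forces $h_1 = 0$ and hence $h \equiv 0$. To turn this into a strict gap $\lambda_c > 0$, I would follow \cite{CanizoLods} and apply a weighted discrete Hardy inequality, using the mass constraint to project out the forbidden direction $h_i = i$.

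The main obstacle is expected to be item (2): converting the nonnegativity of the dissipation into a quantitative gap requires a careful Hardy-type estimate that weighs the dissipation kernel $a_i Q_i Q_1$ against the Hilbert weight $Q_i$ and correctly incorporates the mass constraint. Items (1) and (3), by contrast, reduce to direct computation plus Kato--Rellich once the decomposition has been written out.
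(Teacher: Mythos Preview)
The paper does not give its own proof of this proposition: it is simply quoted from \cite{CanizoLods} (Proposition~2.10 and Corollary~2.11 there), so there is nothing in the present paper to compare against. Your outline is a faithful reconstruction of the Ca\~nizo--Lods argument: the strong-form computation, the identification of $K$ as a rank-one-plus-finite-rank (hence compact) remainder, the detailed-balance reindexing that makes $S$ symmetric and $T$-bounded with bound $\theta<1$ once $z<z_s$, and the Kato--Rellich step for self-adjointness are all correct and match what is actually done in that reference.

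One remark on item~(2): you do not in fact need Hardy's inequality to obtain \emph{some} $\lambda_c>0$. Once you have $L=(T+S)+K$ with $K$ compact and $\|Sh\|_H\le\theta\|Th\|_H$, the form bound from Kato (used in the proof of Lemma~\ref{EvolutionInH} of this paper) together with $\sigma_i\ge Q_1C_1>0$ from \eqref{aLowerBound} shows $T+S$ has a spectral gap; Weyl's theorem then places the essential spectrum of $L$ strictly below zero, and your null-space computation rules out $0$ as an eigenvalue on $H$. Hardy's inequality is used in \cite{CanizoLods} only to get \emph{quantitative} estimates on $\lambda_c$, which is exactly how the present paper describes it just after \eqref{Eqn:L2Gap}.
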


We now prove Lemma \ref{EvolutionInH}.

\begin{proof}[Proof of Lemma \ref{EvolutionInH}]
We first claim that the following spectral gap estimate holds as long as $g$ is sufficiently small: For some $\lambda_H>0$,
\begin{equation} \label{L2NLGapEstimate}
\langle F(g) h, h \rangle_H \leq -\lambda_H \|h\|_H^2 \qquad \mbox{ for all } h \in \dom(L).
\end{equation}
To prove this inequality, we recall \eqref{Def:F} and use Proposition \ref{Prop:CanizoFacts} to estimate
\begin{align}
\langle F(g) h, h \rangle_H &= \langle (1-\e) L h, h \rangle_H  + \e \langle K h , h \rangle_H +  \langle (g\Gamma + \e (T+S)) h, h \rangle_H \\
&\leq -(1-\e) \lambda_c \|h\|_H^2 + \e \|K\|_{L(H,H)} \|h\|_H^2 + \langle (g\Gamma + \e (T+S)) h, h \rangle_H.
\end{align}
We  select $\e$ small enough that $\frac{(1-\e)\lambda_c}{2} > \e \|K\|_{L(H,H)}$. As $S$ is $T$-bounded with $T$-bound $\theta<1$ we have that $S$ is relatively bounded (with relative bound smaller than one) by $\frac{1+\theta}{2} T$. Because $S$ is symmetric, this then implies (see \cite{KatoBook}, p.292, Theorem 4.12) that
\[
\langle \left(  S + \left(\frac{1+\theta}{2}\right)T\right)h,h \rangle_H \leq 0.
\]
Thus we can estimate
\begin{align*}
\langle F(g) h, h \rangle_H &\leq  - \frac{(1-\e)\lambda_c}{2} \|h\|_H^2 + \langle \left(\e \frac{1-\theta}{2} T + g \Gamma\right)h,h\rangle_H \\
&= - \frac{(1-\e)\lambda_c}{2} \|h\|_H^2 + \sum_{i=1}^\infty Q_i \left(-\e \frac{1-\theta}{2}\sigma_i h_i^2 + Q_1a_i g h_i(h_{i+1} - h_i-h_1)\right)\\
&\leq - \frac{(1-\e)\lambda_c}{2} \|h\|_H^2 + \sum_{i=1}^\infty Q_i \left(-\e \frac{1-\theta}{2}\sigma_i h_i^2 +\frac{|Q_1 g|}{2} a_i(4h_i^2 + h_{i+1}^2 + h_1^2)\right)\\
&\leq - \frac{(1-\e)\lambda_c}{2} \|h\|_H^2 + \sum_{i=1}^\infty Q_i \left(-\e \frac{1-\theta}{2}\sigma_i+a_i C|Q_1g|\right) h_i^2,
\end{align*}
where we have used the assumptions \eqref{aLimit} and \eqref{QLimit} and the fact that $\sum_{i=1}^\infty Q_i a_i $ is finite. By \eqref{Def:SigmaI} there exists a $\delta_H>0$ so that if $| g|<\delta_H$ then $(a_i C|Q_1 g|-\e \frac{1-\theta}{2}\sigma_i) < 0$. Thus if $|g|<\delta_H$ we deduce that
\[
\langle F(g) h, h \rangle_H \leq - \frac{(1-\e)\lambda_c}{2} \|h\|_H^2 =: -\lambda_H \|h\|_H^2,
\]
which proves the claim.

We observe, from the previous estimates, that indeed $\|\Gamma h\|_H \leq C\|Th\|_h$. This implies that $S + g\Gamma$ is relatively bounded by $T$ with relative bound strictly less than one, as long as $|g| < \delta_H$, where perhaps we have made $\delta_H$ smaller. As $T$ is self-adjoint, by Theorem 1.3.2 in \cite{HenryBook} we have that  $F(g)$ generates an analytic semigroup in $H$. Furthermore, by the relative bound it is clear that $\dom(F(g))  =\dom(T)= \ell^2(Q_i \sigma_i)$.

Now, as $g(t)$ is $C^1$ it is clear that for $v \in D$ we have that $F(g(t))) v$ is in $C^1(I;H)$. We then directly apply Proposition \ref{NLCPFixedDomain} to obtain the desired result.
\end{proof}

\subsection{Non-linear stability in $X_{1+k}$}

The main goal of this subsection is to prove the following lemma.

\begin{lemma}\label{EvolutionInXk}
Suppose that $g(t) \in C^1(I;\R)$, for $I = [0,T)$ with $T$ possibly infinite. Suppose furthermore that the model coefficients in \eqref{OriginalFluxDefinition} satisfy \eqref{aLowerBound}-\eqref{boundedByI} and that $k > 0$. Then there exists a $\delta_k$ such that if $|g(t)| < \delta_k$ then $\set{F(g(t))}_{t\in I}$ generates an evolution family $U_{X_{1+k}}(t,s)$ in $X_{1+k}$ on the interval $I$ with bound
\[
\|U_{X_k}(t,s)\|_{\mathcal{L}(X_{1+k})} \leq M_k,
\]
where $M_k$ is independent of $s,t$ and the particular choice of $g$.
\end{lemma}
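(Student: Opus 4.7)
The plan is to mimic the proof of Theorem \ref{Cor:LStableSemigroup}, now in the setting of evolution families and with an augmented decomposition tailored to $X_{1+k}$. I apply the extension principle (Proposition \ref{PropMischlerMouhot}) with $Y = X_{1+k}$ and $Z = H$. Lemma \ref{EvolutionInH} already supplies an exponentially decaying evolution family on $H$ generated by $F(g(t))$ provided $|g(t)| < \delta_H$, so the task reduces to splitting $F(g(t)) = A_g(t) + B_g(t)$ so that $B_g(t)$ maps $X_{1+k}$ continuously into $H$ (uniformly in $t$) and $A_g(t)$ generates an evolution family on $X_{1+k}$ with a uniform-in-$t$ bound. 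A natural generalization of the splitting $L = A + B$ from Section \ref{Sec:X1Linear} is to absorb a finite-rank piece of $g(t)\Gamma$ into $B$ and leave the tail with $A$. Concretely, setting $(\Gamma_{\le \tN} h)_i := (\Gamma h)_i \bbone_{\{i \le \tN\}}$ and $\Gamma_{>\tN} := \Gamma - \Gamma_{\le \tN}$, I define
\[
A_g(t) := A + g(t) \Gamma_{>\tN}, \qquad B_g(t) := B + g(t) \Gamma_{\le \tN},
\]
with $\tN$ chosen sufficiently large (possibly larger than its value in Section \ref{Sec:X1Linear}).

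The regularizing bound is immediate: $B$ is bounded $X_1 \to H$ by Lemma \ref{Lemma:BRegularizing}, $X_{1+k}$ embeds continuously in $X_1$, and $\Gamma_{\le \tN}$ has finite-dimensional range, so $\|B_g(t)\|_{\mathcal{L}(X_{1+k},H)}$ is uniformly bounded on $I$. The core calculation is the dissipativity of $A_g(t)$ on $X_{1+k}$, which I obtain by following the weak-form manipulations of Proposition \ref{Prop:ADissipative} but testing with $\phi_i = i^{1+k}\sgn{h_i}$ instead of $i\,\sgn{h_i}$. The sign-difference contribution (the analog of $E_1$) is again nonpositive. The diagonal contribution now produces, besides a term with the favorable factor $(a_iQ_1 - b_i) \le -\delta a_i$ from \eqref{asymptoticallyStrongFragmentation}, a second-difference remainder of order $a_i\, i^{k-1}$ arising from the fact that $(i+1)^{1+k} - 2i^{1+k} + (i-1)^{1+k} \ne 0$. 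The dominant negative contribution is of order $\delta (1+k)\, a_i\, i^{k}|h_i|$; choosing $\tN$ large enough, this dominates the second-difference slack, and the additional $g(t)\Gamma_{>\tN}$ contribution is controlled by $C|g(t)|\sum_{i>\tN} Q_i a_i i^{1+k}|h_i|$, which can in turn be absorbed provided $|g(t)| < \delta_k$ is sufficiently small.

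Given dissipativity, generation on $X_{1+k}$ follows as in Lemma \ref{Lem:AContraction}: Lumer--Phillips, with the range density of $A_g(t) - \lambda I$ verified on the dense subspace $H$, on which $A_g(t) = F(g(t)) - B_g(t)$ is a bounded perturbation of the generator $F(g(t))$ supplied by Lemma \ref{EvolutionInH}. Because $g \in C^1(I;\R)$ and $g\Gamma_{>\tN}$ is $A$-bounded (using Lemma \ref{Lemma:GammaEstimate}), the family $\{A_g(t)\}$ has a common domain and depends $C^1$-smoothly on $t$, so Proposition \ref{NLCPFixedDomain} yields a contraction evolution family on $X_{1+k}$ (so $M_V = 1$, $\lambda_Y = 0$). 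With $\lambda_Z := \lambda > 0 = \lambda_Y$, Proposition \ref{PropMischlerMouhot} then delivers the desired uniform bound $\|U_{X_{1+k}}(t,s)\|_{\mathcal{L}(X_{1+k})} \le M_k$. Taking $\delta_k := \min(\delta_H,\text{ the threshold from the dissipativity step})$ completes the proof.

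The principal technical obstacle is the dissipativity step. In Proposition \ref{Prop:ADissipative} the choice $\phi_i = i\,\sgn{h_i}$ gives a miraculous cancellation $(i+1) - 2i + (i-1) = 0$, so dissipativity reduces to the single sign condition $a_iQ_1 \le b_i$ from \eqref{asymptoticallyStrongFragmentation}. In $X_{1+k}$ that cancellation is lost, and one must use \eqref{asymptoticallyStrongFragmentation} quantitatively, exploiting the strict gap $\delta a_i$ (together with \eqref{aLowerBound}) to dominate a positive $O(a_i i^{k-1})$ second-difference remainder by the negative $O(a_i i^{k})$ contribution. The same margin must then absorb the $g\Gamma_{>\tN}$ term, which is what forces $\delta_k$ to be small.
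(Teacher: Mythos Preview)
Your overall strategy matches the paper's exactly: split $F(g(t))=A_g(t)+B_g(t)$, show $B_g$ is regularizing into $H$, show $A_g$ is dissipative on $X_{1+k}$ (with the quantitative use of \eqref{asymptoticallyStrongFragmentation} to absorb the second-difference remainder---this is precisely the content of Proposition~\ref{Prop:L1Dissipation}), and apply the extension principle with $Z=H$, $Y=X_{1+k}$. Two technical points, however, need repair.

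First, your cut $\Gamma=\Gamma_{\le \tN}+\Gamma_{>\tN}$ by \emph{output index} does not respect the mass constraint built into $H$ and $X_{1+k}$: since $\sum_i Q_i i(\Gamma h)_i=0$, one has $\sum_{i\le \tN}Q_i i(\Gamma h)_i=-\sum_{i>\tN}Q_i i(\Gamma h)_i\ne 0$ in general, so neither $\Gamma_{\le \tN}h$ nor $\Gamma_{>\tN}h$ lies in the zero-mass subspace. Consequently $B_g(t)$ does not map $X_{1+k}$ into $H$ (so the regularizing hypothesis fails), and $A_g(t)$ is not even an operator on $X_{1+k}$. The paper avoids this by cutting in the \emph{weak-form flux index}, defining $A(g)$ by \eqref{def:AgOperator}: because $\phi_i=i$ makes $\phi_{i+1}-\phi_i-\phi_1=0$ identically, any truncation of the weak-form sum automatically preserves mass. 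Your $A_g$ and the paper's $A(g)$ differ only by a finite-rank correction supported at indices $1$ and $\tN$, so the fix is minor---but it is required, and the claim that $\Gamma_{\le \tN}$ has ``finite-dimensional range'' in $H$ is false as stated (the range is finite-dimensional in $\ell^1$ but not contained in $H$).

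Second, you assemble the evolution family for $\{A_g(t)\}$ via Proposition~\ref{NLCPFixedDomain}, which requires a common domain and $C^1$ dependence on $t$. The common-domain claim is not immediate: you would need $\Gamma_{>\tN}$ to be $A$-bounded on $X_{1+k}$ with relative bound $<1$, and the domain of the closure of $A$ on $X_{1+k}$ is never explicitly identified. The paper sidesteps this entirely by instead verifying dissipativity of $A(g)$ on \emph{both} $X_{1+k}$ and $X_{2+k}$ (same calculation, two values of $k$, hence the appearance of $\delta_{k+1}$ in \eqref{Eqn:gBound}) and invoking the Kato-type Proposition~\ref{PazyNLCP2}, which only needs boundedness of $A(g):X_{2+k}\to X_{1+k}$ and continuity in $t$---both immediate from Lemma~\ref{Lemma:GammaEstimate} and $g\in C^1$.
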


To prove this lemma we will use Proposition \ref{PropMischlerMouhot}, in conjunction with the stability in $H$ established in the previous subsection. These techniques should also be applicable in the spaces $X_\eta$, but for the sake of clarity we do not pursue the analysis here.

To begin, we define the operator $A(g)$ in weak form by
\begin{equation} \label{def:AgOperator}
\begin{aligned}
\sum_{i=1}^\infty Q_i (A(g)h)_i \phi_i := &\sum_{i = \tN}^\infty Q_iQ_1a_i (h_i - h_{i+1} + gh_i)(\phi_{i+1}-\phi_i-\phi_1) \\
&  - Q_{\tN-1} Q_1a_{\tN-1} h_{\tN}(\phi_{\tN} - \phi_{\tN-1} - \phi_1) \,,
\end{aligned}
\end{equation}
where $\tN$ is a constant, greater than $N_z+1$, to be determined. We then define $B(g) := F(g) - A(g)$.

The next proposition establishes the dissipativity of $A(g)$.

\begin{proposition} \label{Prop:L1Dissipation}
Under the assumptions of Lemma \ref{EvolutionInXk}, and if $\tN$ in \eqref{def:AgOperator} is chosen large enough, then there exists a constant $\delta_k$ so that if $|g| < \delta_k$ then
\begin{equation}\label{Eqn:DissipationXk}
\langle \sgn{h},A(g) h \rangle_{X_{1+k}^*,X_{1+k}} \leq 0 \quad\mbox{for all $h\in X_{2+k}$.}
\end{equation}
\end{proposition}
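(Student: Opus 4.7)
The plan is to substitute $\phi_i := w_i\sgn{h_i}$, with $w_i := i^{1+k}$, into the weak form \eqref{def:AgOperator}, and mimic the rearrangement carried out in Proposition~\ref{Prop:ADissipative}. Using the identity
\[
\phi_{i+1}-\phi_i-\phi_1 \;=\; w_i(\sgn{h_{i+1}}-\sgn{h_i}) + (w_{i+1}-w_i)\sgn{h_{i+1}}-\sgn{h_1},
\]
together with the detailed-balance identity $Q_iQ_1a_i = Q_{i+1}b_{i+1}$ to move the $-h_{i+1}$ contribution onto index $j=i+1$, the inner product decomposes as
\[
\langle\sgn{h},A(g)h\rangle_{X_{1+k}^*,X_{1+k}} \;=\; \mathcal{S}+\mathcal{D}+\mathcal{W}+\mathcal{R}_g+\mathcal{R}_{\tN}.
\]
Here $\mathcal{S}$ collects the sign-difference contributions (the analog of $E_1$ in Proposition~\ref{Prop:ADissipative}), $\mathcal{D}$ collects the diagonal ``fragmentation wins'' terms (the analog of $E_2+E_3$), $\mathcal{W}$ collects the new ``weight-derivative'' contributions produced by the factor $(w_{i+1}-w_i)$ (which vanish in the $X_1$ case since $(i+1)-i=1$), $\mathcal{R}_g$ encodes the $gh_i$ perturbation, and $\mathcal{R}_{\tN}$ is the boundary term from the last summand of \eqref{def:AgOperator}.

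The good terms are handled essentially verbatim from Proposition~\ref{Prop:ADissipative}. First, $\mathcal{S}\le 0$ follows from the elementary inequality $h_i(\sgn{h_{i\pm1}}-\sgn{h_i})\le 0$ once we require $|g|<1$ so that $1+g>0$. Second, assuming $\tN\ge N_z+1$, the asymptotically-strong fragmentation bound \eqref{asymptoticallyStrongFragmentation} gives
\[
\mathcal{D} \;\le\; -\delta\sum_{i\ge\tN}Q_i a_i w_i |h_i|,
\]
where one extracts an unconditional diagonal dissipation $-\sum_{i\ge\tN}Q_i(b_i-a_iQ_1)w_i|h_i|$ before combining with the $\sgn{h_1}$-dependent piece (this is the $E_2$-before-$E_3$ step, which works on all indices rather than only where $\sgn{h_i}\ne\sgn{h_1}$). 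The new quantity $\mathcal{W}$ is bounded using $(i+1)^{1+k}-i^{1+k}\le C(k)\,i^{k}$ together with $b_i = O(a_i)$ from \eqref{tildeQLimit}, giving
\[
|\mathcal{W}| \;\le\; C(k)\sum_{i\ge\tN}Q_i a_i i^k|h_i| \;\le\; \frac{C(k)}{\tN}\sum_{i\ge\tN}Q_i a_i w_i|h_i|,
\]
so $\mathcal{W}$ is a factor $1/\tN$ smaller than the dissipation in $\mathcal{D}$. Similarly $|\mathcal{R}_g|\le C|g|\sum_{i\ge\tN}Q_i a_i w_i|h_i|$, and $\mathcal{R}_{\tN}$ is a fixed finite combination of $|h_{\tN-1}|$ and $|h_{\tN}|$ which is absorbed into the strictly-negative portion of $\mathcal{S}$ localized near $i=\tN$, exactly as the analogous boundary term is absorbed in Proposition~\ref{Prop:ADissipative}.

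Combining these estimates yields
\[
\langle\sgn{h},A(g)h\rangle_{X_{1+k}^*,X_{1+k}} \;\le\; -\left(\delta - \frac{C(k)}{\tN} - C|g|\right)\sum_{i\ge\tN}Q_ia_iw_i|h_i|,
\]
so the argument closes by first fixing $\tN$ large enough (depending on $k$) that $C(k)/\tN<\delta/4$, and then choosing $\delta_k$ small enough that $C\delta_k<\delta/4$, producing \eqref{Eqn:DissipationXk}. The main obstacle is the bookkeeping in the decomposition itself: in the $X_1$ case the weight differences $w_{i+1}-w_i$ equal $1$ and absorb trivially, whereas for $w_i=i^{1+k}$ they are genuinely of order $i^k$, and their control depends crucially on the ratio $(w_{i+1}-w_i)/w_i = O(1/i)$ beating the strong-fragmentation rate uniformly, which is precisely why $\tN$ must be chosen (once and for all) sufficiently large depending on~$k$.
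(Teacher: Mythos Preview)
Your decomposition has the scales wrong, and this breaks the argument. After substituting $\phi_i=w_i\sgn{h_i}$ and shifting the $-h_{i+1}$ contribution via $Q_iQ_1a_i=Q_{i+1}b_{i+1}$, the genuine diagonal piece one obtains is
\[
\sum_{i\ge\tN}Q_i|h_i|\bigl(a_iQ_1(w_{i+1}-w_i)+b_i(w_{i-1}-w_i)\bigr)
\;\sim\;
\sum_{i\ge\tN}Q_i|h_i|(a_iQ_1-b_i)(w_{i+1}-w_i),
\]
which by \eqref{asymptoticallyStrongFragmentation} is $\le-C\sum Q_i|h_i|a_i(w_{i+1}-w_i)$. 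There is no term of the form $-(b_i-a_iQ_1)w_i|h_i|$; the dissipation lives at scale $a_i(w_{i+1}-w_i)\sim a_ii^k$, not $a_iw_i=a_ii^{1+k}$. Thus your $\mathcal{W}$ is not a lower-order correction to $\mathcal{D}$ --- it \emph{is} the dissipation mechanism --- and your claimed bound $\mathcal{D}\le-\delta\sum Q_ia_iw_i|h_i|$ is simply false. For the same reason the crude estimate $|\mathcal{R}_g|\le C|g|\sum Q_ia_iw_i|h_i|$ is useless: it is one power of $i$ too large to be absorbed by the actual dissipation.

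The paper's proof addresses precisely these two points. For the diagonal term ($E_2$) it writes $\sum Q_i|h_i|(w_{i+1}-w_i)\bigl(a_iQ_1-b_i\tfrac{w_i-w_{i-1}}{w_{i+1}-w_i}\bigr)$ and uses that the ratio tends to $1$, so \eqref{asymptoticallyStrongFragmentation} applies for $\tN$ large; the $\sgn{h_1}$ piece ($E_3$) is of size $O(a_i)$ and is absorbed because $w_{i+1}-w_i\to\infty$. For the $g$ term ($E_4$) one must \emph{not} take absolute values blindly: when $g\ge0$ the diagonal contribution $-gQ_1a_ih_iw_i\sgn{h_i}=-gQ_1a_i|h_i|w_i$ keeps its sign and combines with $gQ_1a_ih_iw_{i+1}\sgn{h_{i+1}}\le gQ_1a_i|h_i|w_{i+1}$ to yield $gQ_1a_i|h_i|(w_{i+1}-w_i)$, the correct scale; when $g<0$ one groups $E_1+E_4$ together and again lands at the right scale. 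Separately, your handling of the boundary summand in \eqref{def:AgOperator} is a misreading: since $Q_{\tN-1}Q_1a_{\tN-1}=Q_{\tN}b_{\tN}$, that term exactly supplies the $j=\tN$ entry missing from the shifted $b_j$--sum, so no ``absorption into $\mathcal{S}$'' is needed (nor does any such absorption occur in Proposition~\ref{Prop:ADissipative}).
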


\begin{proof} With $w_i = i^{1+k}$ and using $\phi_i=w_i\sgn{h_i}$ in \eqref{def:AgOperator}, we compute, as in the proof of Proposition~\ref{Prop:ADissipative},
\begin{align}
&\langle \sgn{h}, A(g) h \rangle_{X_{1+k}^*,X_{1+k}} 
\\&\quad = 
\sum_{i = \tN}^\infty Q_ih_i\Bigl(
Q_1a_iw_{i+1}(\sgn{h_{i+1}} -\sgn{h_i})
+b_i w_{i-1} (\sgn{h_{i-1}}-\sgn{h_i})
\Bigr)
\\&\qquad + 
\sum_{i = \tN}^\infty Q_i|h_i| (a_iQ_1(w_{i+1} - w_i) + b_i(w_{i-1} - w_{i}))
\\&\qquad
+ \sgn{h_1} \sum_{i = \tN}^\infty Q_ih_i(b_i-Q_1a_i) 
\\&\qquad
 + g\sum_{i = \tN}^\infty Q_i h_iQ_1a_i (w_{i+1}\sgn{h_{i+1}} - w_i \sgn{h_i}-\sgn{h_1}) 
\\&\quad =: E_1 + E_2 + E_3 + E_4 .
\end{align}
First we estimate $E_2$, written as
\begin{align*}
E_2
&= \sum_{i=\tN}^\infty Q_i|h_i|(w_{i+1} - w_i)\left(a_i Q_1 -b_i \frac{w_i-w_{i-1}}{w_{i+1}-w_i}\right).
\end{align*}
By choosing $\tN$ sufficiently large we can make the ratio $\frac{w_i-w_{i-1}}{w_{i+1}-w_i}$ arbitrarily close to $1$. Thus we apply \eqref{asymptoticallyStrongFragmentation} to find that
\[
E_2 \leq - C \sum_{i=\tN}^\infty Q_i |h_i| a_i (w_{i+1} - w_i).
\]
We next calculate
\[
E_3 
\leq \sum_{i = \tN}^\infty Q_i  |h_i|(b_i+Q_1a_i).
\]
Recalling \eqref{tildeQLimit}, and using that $w_{i+1}-w_i\to\infty$ since {$k>0$}, we thus have, for $\tN$ sufficiently large,
\[
E_2 + E_3 \leq - C \sum_{i=\tN}^\infty Q_i |h_i|a_i (w_{i+1} - w_i) .
\]
Because $h_i(\sgn{h_{i\pm1}}-\sgn{h_i})\le0$, we infer $E_1 \leq 0$. 
  Thus, in the case $g \geq 0$ we estimate
\begin{align}
E_1+E_4 \leq E_4 
&\leq |g|
\sum_{i = \tN}^\infty Q_i Q_1a_i|h_i| (w_{i+1}- w_i +1) 
\\&\leq C| g| \sum_{i=\tN}^\infty Q_i |h_i|a_i (w_{i+1} - w_i).\label{Eqn:gPositive}
\end{align}
For $g < 0$  we find that
\begin{align}
E_1+ E_4 
&\leq \ %
\sum_{\mathclap{\substack{i \geq \tN\\ \sgn{h_i} \neq \sgn{h_{i+1}}}}}
Q_i|h_i|Q_1a_i(-2w_{i+1} - g(w_{i+1} + w_i))
\\&\quad +
|g| \sum_{i = \tN}^\infty Q_i|h_i| Q_1a_i
\label{Eqn:gNegative}
\end{align}
When $|g| < 1$ we have that the first term in \eqref{Eqn:gNegative} is negative. This then readily implies that for $\tN$ sufficiently large and for $|g|$ sufficiently small we have that
\[
\langle \sgn{h}, A(g) h \rangle_{X_{1+k}^*,X_{1+k}} \leq - C \sum_{i=\tN}^\infty Q_i (w_{i+1} - w_i) a_i |h_i| \leq 0,
\]
which completes the proof.
\end{proof}

The next step is to prove that $\set{A(g(t))}$ indeed generates an evolution family.

\begin{lemma}\label{lem:AgEvolutionFamily}
Suppose that the assumptions of Lemma \ref{EvolutionInXk} are satisfied. Suppose furthermore that
\begin{equation}\label{Eqn:gBound}
|g(t)|<\min\{\delta_k,\delta_{k+1},\delta_H\},
\end{equation}
where $\delta_k$ is given in Proposition \ref{Prop:L1Dissipation} and $\delta_H$ in Lemma \ref{EvolutionInH}. Then for $\tN$ chosen as in Proposition \ref{Prop:L1Dissipation}, the family $\set{A(g(t))}_{t\in I}$ generates an evolution family $V_{X_{1+k}}$ on the interval $I = [0,T)$ in the space $X_{1+k}$, which for $0\leq s \leq t <T$ satisfies
\[
\|V_{X_{1+k}}(t,s)\|_{\mathcal{L}(X_{1+k})} \leq 1.
\]
\end{lemma}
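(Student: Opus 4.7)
The plan is to apply Pazy's non-autonomous generation theorem (Proposition \ref{NLCPFixedDomain}) to $\{A(g(t))\}_{t\in I}$ on $X_{1+k}$. Its three hypotheses are: (i) for each fixed $g$ with $|g|<\delta_k$, the closure of $A(g)$ generates a contraction semigroup on $X_{1+k}$; (ii) the domains $\dom(\overline{A(g(t))})$ coincide for all $t$; and (iii) $t \mapsto A(g(t))h$ is $C^1$ in $X_{1+k}$ on this common domain. Once these are verified, the bound $\|V_{X_{1+k}}(t,s)\|_{\mathcal{L}(X_{1+k})}\le 1$ follows immediately from the conclusion of that proposition.

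For (i), I will follow the scheme of Lemma \ref{Lem:AContraction}. Dissipativity of $A(g)$ on $X_{1+k}$ is Proposition \ref{Prop:L1Dissipation}. The Lumer-Phillips range condition follows by observing that $B(g):=F(g)-A(g)$ is bounded from $X_1$ into $H$: reading off the weak forms \eqref{Def:F} and \eqref{def:AgOperator} shows $B(g) = B + g\tilde{\Gamma}$, where $B$ is the operator from Lemma \ref{Lemma:BRegularizing} and $\tilde{\Gamma}$ has the finite-rank weak form $\sum_{i=1}^{\tN-1} a_iQ_iQ_1 h_i (\phi_{i+1}-\phi_i-\phi_1)$. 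Since the exponential decay of $Q_i$ yields $H\hookrightarrow X_1$, $B(g)$ also acts boundedly on $H$. Lemma \ref{EvolutionInH} applied to constant $g$ with $|g|<\delta_H$ tells us $F(g)$ generates a semigroup on $H$, hence by bounded perturbation so does $A(g)=F(g)-B(g)$. Therefore $\mathrm{ran}(\lambda I - A(g)) \supset H$ for large $\lambda$, which is dense in $X_{1+k}$, and Lumer-Phillips gives that $\overline{A(g)}$ generates a contraction semigroup on $X_{1+k}$.

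For (ii), write $A(g)=A+gD$, where $D$ has weak form $\sum_{i\ge\tN}Q_iQ_1a_ih_i(\phi_{i+1}-\phi_i-\phi_1)$. Both operators are tridiagonal-type with diagonal parts $-(Q_1a_i+b_i)h_i$ and $-Q_1a_ih_i$ respectively; by \eqref{asymptoticallyStrongFragmentation}, for $i\ge\tN$ the ratio $Q_1a_i/(Q_1a_i+b_i)$ is uniformly strictly below one, while the finite tail $i<\tN$ contributes only bounded corrections. This produces a relative bound $\|Dh\|_{X_{1+k}}\le\theta\|Ah\|_{X_{1+k}}+C\|h\|_{X_{1+k}}$ with $\theta<1$, so the Kato-Rellich perturbation theorem gives $\dom(\overline{A(g)})=\dom(\overline{A})$ uniformly for all $|g|<1/\theta$. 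Condition (iii) is immediate since $A(g)h$ is affine in $g$: $\tfrac{d}{dt}A(g(t))h=g'(t)Dh$ lies in $X_{1+k}$ and is continuous in $t$ on the common domain. The main obstacle is the relative bound in step (ii); it is precisely the uniform strict subcriticality provided by \eqref{asymptoticallyStrongFragmentation}—and the freedom to choose $\tN$ large, just as in Proposition \ref{Prop:L1Dissipation}—that makes it go through. With these three conditions verified, Proposition \ref{NLCPFixedDomain} delivers the evolution family $V_{X_{1+k}}$ with the claimed contraction bound.
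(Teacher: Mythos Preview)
Your step (i) is correct and matches the paper's argument verbatim: dissipativity from Proposition~\ref{Prop:L1Dissipation}, and the Lumer--Phillips range condition via the semigroup on $H$ (Lemma~\ref{EvolutionInH}) together with boundedness of $B(g)=F(g)-A(g)$ on $H$.

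The gap is in step (ii). Your relative-bound claim $\|Dh\|_{X_{1+k}}\le\theta\|Ah\|_{X_{1+k}}+C\|h\|_{X_{1+k}}$ with $\theta<1$ is justified only by comparing the diagonal entries of $D$ and $A$. That heuristic is what makes the $T$-boundedness work in the Hilbert space $H$ (Proposition~\ref{Prop:CanizoFacts}), but it does not transfer to $\ell^1$-type spaces: in $X_{1+k}$ the off-diagonal shifts $h_j\mapsto b_jh_{j-1}$ and $h_j\mapsto Q_1a_jh_{j+1}$ carry essentially the same weight as the diagonal (the index shift and the detailed-balance relation $Q_iQ_1a_i=Q_{i+1}b_{i+1}$ exactly compensate). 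Concretely, on a nearly constant sequence $h_j\equiv 1$ for $N\le j\le M$ one has $(Ah)_j=0$ throughout the bulk while $(Dh)_j=b_j-Q_1a_j\sim (z_s-z)a_j$, which grows without bound whenever $a_j\to\infty$ (the typical case \eqref{PenroseCoefficients}). So the diagonal comparison tells you nothing about $\|Dh\|/\|Ah\|$, and without a genuine proof of $A$-boundedness you cannot conclude $\dom(\overline{A(g)})=\dom(\overline{A})$; step (iii) then also lacks a footing, since $Dh\in X_{1+k}$ for $h$ in the abstract domain of $\overline A$ is exactly what is in question.

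The paper sidesteps the domain issue entirely by using a different generation theorem: instead of Proposition~\ref{NLCPFixedDomain} (constant-domain theory), it invokes Kato's two-space theory, recorded as Proposition~\ref{PazyNLCP2} (Pazy, Theorem~5.3.1). The hypotheses there are that $A(g(t))$ generate contraction semigroups on \emph{both} $X_{1+k}$ and $X_{2+k}$, and that $t\mapsto A(g(t))$ be continuous as a map into $\mathcal L(X_{2+k},X_{1+k})$. The first two follow by running your step-(i) argument twice---this is precisely why the hypothesis \eqref{Eqn:gBound} carries $\delta_{k+1}$ as well as $\delta_k$---and the third is immediate from \eqref{boundedByI} and the affine dependence on $g$. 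No characterization of $\dom(\overline{A})$ is needed.
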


To prove this lemma, we use the following proposition, which is a direct application of Theorem 5.3.1 in \cite{PazyBook}.

\begin{proposition}\label{PazyNLCP2}
Let $I=[0,T)$, with $T=\infty$ permitted, and suppose that a family of linear operators $\set{C(t)}_{t\in I}$ satisfies the following for all $t \in I$.
\begin{enumerate}
\item $C(t)$ generates a contraction semigroup on $X_{1+k}$.
\item $C(t)$ generates a contraction semigroup on $X_{2+k}$.
\item $C(t)$ is a bounded operator from $X_{2+k}$ to $X_{1+k}$, and the map $t\mapsto C(t)$ is continuous from $I$ to ${\cal L}(X_{2+k},X_{1+k})$.
\end{enumerate}
Then $\set{C(t)}_{t\in I}$ generates an evolution family $V_{X_{1+k}}$ satisfying $\|V_{X_{1+k}}(t,s)\|_{\mathcal{L}(X_{1+k})} \leq 1$.
\end{proposition}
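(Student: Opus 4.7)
The plan is to apply Proposition \ref{PazyNLCP2} with $C(t) = A(g(t))$, using the pair of spaces $X_{1+k}$ and $X_{2+k}$. Three conditions must be checked: $A(g(t))$ generates a contraction semigroup on each of $X_{1+k}$ and $X_{2+k}$, and $A(g(t))$ is continuous as a map $I \to \mathcal{L}(X_{2+k}, X_{1+k})$. The hypotheses $|g|<\delta_k$ and $|g|<\delta_{k+1}$ line up precisely with the dissipativity statements of Proposition \ref{Prop:L1Dissipation} at levels $k$ and $k+1$ respectively (note $X_{2+k}=X_{1+(k+1)}$), while $|g|<\delta_H$ is needed only to invoke the $H$-semigroup machinery from Lemma \ref{EvolutionInH}. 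A single $\tN$ large enough to ensure the dissipativity inequalities at both levels can be chosen once and for all.

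For generation of a contraction semigroup on $X_{1+k}$ (and analogously on $X_{2+k}$), I would mimic the proof of Lemma \ref{Lem:AContraction} and apply the Lumer-Phillips theorem. Dissipativity is immediate from Proposition \ref{Prop:L1Dissipation}. For range-density, decompose $B(g):=F(g)-A(g)$: comparing \eqref{def:AgOperator} with \eqref{Def:F} and \eqref{def:AOperator}, this difference equals $(L-A)$ plus $g$ times the finitely-supported head of $\Gamma$ (the piece summed over $i<\tN$). The first summand is bounded from $X_1$ to $H$ by Lemma \ref{Lemma:BRegularizing}; the second is a bounded finite-rank operator; so $B(g)$ is bounded from $X_{1+k}$ into $H$, with norm controlled uniformly for $g$ in a bounded set. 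Since Lemma \ref{EvolutionInH} gives that $F(g)=A(g)+B(g)$ generates a semigroup on $H$ when $|g|<\delta_H$, the bounded perturbation theorem shows $A(g)$ itself generates a semigroup on $H$. Hence $(A(g)-\lambda I)$ is surjective on $H$ for $\lambda$ sufficiently large; because $H$ is continuously embedded and dense in $X_{1+k}$ (by exponential decay of $Q_i$ from \eqref{QLimit}, together with density of the finitely supported zero-mass sequences), the range of $(A(g)-\lambda I)$ is then dense in $X_{1+k}$. Lumer-Phillips yields the contraction semigroup. The argument in $X_{2+k}$ is identical after replacing $k$ with $k+1$.

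For condition (3), boundedness of $A(g(t)) \in \mathcal{L}(X_{2+k}, X_{1+k})$ is a direct calculation as in Lemma \ref{Lemma:GammaEstimate}, using the coefficient bound \eqref{boundedByI}; the factor of $i$ picked up by each coagulation/fragmentation event is absorbed by moving from the $i^{1+k}$ weight to the $i^{2+k}$ weight. Continuity of $t\mapsto A(g(t))$ in this operator norm follows because $A(g)$ depends affinely on $g$, while $g\in C^1(I;\R)$ by hypothesis.

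The main obstacle is the range-density step: one must argue simultaneously in two function spaces and verify that the $H$-level semigroup generation (which itself relies on the analyticity argument inside Lemma \ref{EvolutionInH}) restricts compatibly so that surjectivity on $H$ propagates to density on $X_{1+k}$ and $X_{2+k}$. Everything else is bookkeeping. Once these pieces are in place, Proposition \ref{PazyNLCP2} directly produces the desired evolution family $V_{X_{1+k}}$ with contraction estimate $\|V_{X_{1+k}}(t,s)\|_{\mathcal{L}(X_{1+k})}\le 1$.
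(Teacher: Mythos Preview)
You have proved the wrong statement. Proposition \ref{PazyNLCP2} is an \emph{abstract} result: it concerns a generic family $\{C(t)\}$ on the pair of spaces $X_{1+k},X_{2+k}$, and asserts that conditions (1)--(3) suffice for an evolution family with contraction bound. The paper does not argue this from scratch; it simply records that Proposition \ref{PazyNLCP2} is a direct instance of Theorem 5.3.1 in Pazy \cite{PazyBook} (the Kato-type existence theorem for evolution families, with $X_{2+k}$ playing the role of the ``admissible'' subspace $Y$). A proof of Proposition \ref{PazyNLCP2} would therefore either cite that theorem or reproduce its construction (approximation by piecewise-constant families, uniform resolvent bounds from the contraction hypothesis, passage to the limit using the $X_{2+k}\to X_{1+k}$ continuity to control commutators).

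What you have written is instead a proof of Lemma \ref{lem:AgEvolutionFamily}: you take $C(t)=A(g(t))$ and \emph{verify the hypotheses} (1)--(3) of Proposition \ref{PazyNLCP2}, then invoke the proposition to conclude. Your verification is in fact essentially identical to the paper's own proof of Lemma \ref{lem:AgEvolutionFamily} (dissipativity from Proposition \ref{Prop:L1Dissipation}, range-density by pulling back the $H$-semigroup for $F(g)$ through the bounded perturbation $B(g)$, continuity from the affine dependence on $g$ and \eqref{boundedByI}). So your argument is correct, just misaddressed: it belongs under Lemma \ref{lem:AgEvolutionFamily}, not here.
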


We now prove Lemma \ref{lem:AgEvolutionFamily}.

\begin{proof}[Proof of Lemma \ref{lem:AgEvolutionFamily}]
We claim that $\set{A(g(t))}_{t\in I}$ satisfies the assumptions of Proposition \ref{PazyNLCP2}. By \eqref{Eqn:gBound} and Proposition \ref{Prop:L1Dissipation} we have that $A(g(t))$ is dissipative on $X_{1+k}$ and $X_{2+k}$. For fixed $t\in I$, by \eqref{Eqn:gBound}, $F(g(t))$ generates a semigroup on $H$ (as established in the proof of Lemma \ref{EvolutionInH}). As $B(g(t))$ is a bounded operator on $H$, it then must be that $A(g(t))$ generates a semigroup on $H$. This then implies that for some large, positive real $\lambda$ we must have that
the range of  $A(g(t))-\lambda$ contains  $H$. Thus the range of $A(g(t)) - \lambda$ is dense in $X_{1+k}$ and $X_{2+k}$. As in the proof of Lemma \ref{Lem:AContraction}, this implies that $A(g(t))$ generates a semigroup on $X_{1+k}$ and $X_{2+k}$, and thus the first two assumptions are satisfied.

Next, as $g(t)$ is $C^1$ and by \eqref{boundedByI}, the third assumption is necessarily satisfied. Thus we may apply  Proposition \ref{PazyNLCP2}, which proves the lemma.
\end{proof}

We remark that the previous lemma is independent of the choice of $g(t)$. The next result follows from a computation as in Lemma \ref{Lemma:BRegularizing}, and we omit the proof.

\begin{lemma}\label{Lem:BgRegularizing}
Under the assumptions of Lemma \ref{lem:AgEvolutionFamily}, the operator $B(g(t))$ is uniformly bounded from $X_1$ to $H$, with a bound that depends only on $\delta_{k}$, and not on $g$ or $t$.
\end{lemma}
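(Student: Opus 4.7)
The plan is to decompose $B(g) = F(g) - A(g)$ explicitly into the part already handled in Lemma~\ref{Lemma:BRegularizing} plus a residual operator that is linear in $g$ and finite-range. Subtracting the weak form \eqref{def:AgOperator} of $A(g)$ from the weak form of $F(g)$ displayed immediately after \eqref{Def:F}, all terms of the form $g h_i(\phi_{i+1} - \phi_i - \phi_1)$ for $i \geq \tN$ cancel, and the $g$-independent remainder reassembles into $B = L - A$. Thus one writes $B(g) = B + g\, G$, where $G$ is defined through
\[
\sum_{i=1}^\infty Q_i (Gh)_i \phi_i = \sum_{i=1}^{\tN-1} a_i Q_i Q_1 h_i (\phi_{i+1} - \phi_i - \phi_1).
\]

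Lemma~\ref{Lemma:BRegularizing} already yields $\|Bh\|_H \leq M_B \|h\|_{X_1}$, with $M_B$ depending only on $\tN$ and the fixed model coefficients. For the residual $G$, the key observation is that it depends only on the coordinates $h_1, \ldots, h_{\tN-1}$ of the input, and produces a sequence supported in $\{1, \ldots, \tN\}$ on the output side. Setting $\phi = Gh$ in its weak form and applying Cauchy--Schwarz together with the equivalence of norms on the finite-dimensional space of sequences supported in $\{1, \ldots, \tN\}$ (using also $0 < c \leq Q_i/Q_{i+1} \leq C < \infty$ from \eqref{QLimit} to compare different weighted norms), one obtains $\|Gh\|_H \leq C_{\tN} \|h\|_{X_1}$. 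This mirrors exactly the bounds on the terms $B_1$ and $B_3$ in Lemma~\ref{Lemma:BRegularizing}.

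Combining the two estimates yields
\[
\|B(g(t))h\|_H \leq (M_B + |g(t)|\, C_{\tN})\|h\|_{X_1} \leq (M_B + \delta_k C_{\tN})\|h\|_{X_1},
\]
which is the desired uniform bound depending only on $\delta_k$ (and on the now-fixed $\tN$ from Proposition~\ref{Prop:L1Dissipation}), and independent of $t$ and of the particular choice of $g$. There is no genuine obstacle in the argument; the only bookkeeping to get right is the cancellation identifying the $g$-dependent residual as the finite-range operator $G$, after which the regularization estimate is a verbatim rewriting of the computation in Lemma~\ref{Lemma:BRegularizing}, which is presumably why the proof was omitted in the paper.
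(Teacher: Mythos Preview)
Your proposal is correct and matches the paper's intent: the paper omits the proof, stating only that it ``follows from a computation as in Lemma~\ref{Lemma:BRegularizing},'' and your decomposition $B(g)=B+gG$ with $G$ the finite-range residual makes that computation explicit. The cancellation you identify is accurate, and the bound on $G$ is indeed of the same type as the $B_1,B_3$ terms in Lemma~\ref{Lemma:BRegularizing}; the only minor caveat is that the $\tN$ fixed in Section~\ref{Sec:NonLinear} (via Proposition~\ref{Prop:L1Dissipation}) may differ from the one used in Section~\ref{Sec:X1Linear}, but the proof of Lemma~\ref{Lemma:BRegularizing} works verbatim for any fixed $\tN$, so this is harmless.
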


We are now in a position to prove Lemma \ref{EvolutionInXk}.

\begin{proof}[Proof of Lemma \ref{EvolutionInXk}]
In light of Lemmas \ref{lem:AgEvolutionFamily} and \ref{Lem:BgRegularizing} this follows  from Proposition \ref{PropMischlerMouhot}.
\end{proof}

\begin{remark}
We note that the bound $M_k$ is not dependent on the particular function $g(t)$, and only on its bound $\delta_{k}$. This is because of the independence on $g(t)$ in the bounds obtained in lemmas \ref{lem:AgEvolutionFamily} and \ref{Lem:BgRegularizing}.
\end{remark}

The following is a direct application of Theorem 5.4.2 in \cite{PazyBook}.

\begin{proposition}\label{Prop:ClassicalSol}
Suppose that, for some $k \geq 0$, $\set{C(t)}_{t\in I}$ generates an evolution family $U$ in $X_{1+k}$ on the interval $I=[0,T)$. Furthermore, suppose that for some $h \in C(I;X_{2+k}) \bigcap C^1(I;X_{1+k})$ we have that
\[
\frac{d}{dt} h = C(t) h(t)
\]
is satisfied in $X_{1+k}$. Then it must be that $U(t,0)h(0) = h(t)$.
\end{proposition}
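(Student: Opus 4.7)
The plan is to prove uniqueness via the standard ``evolution family trick'': fix $t\in I$ and consider the map $s\mapsto w(s):=U(t,s)h(s)$ for $s\in[0,t]$, and show it is constant in $s$ with values in $X_{1+k}$. Then evaluating at $s=0$ and $s=t$ gives $U(t,0)h(0)=U(t,t)h(t)=h(t)$, which is the claim. This is the standard argument that underlies Pazy's Theorem 5.4.2, which the excerpt already cites.

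To carry this out, first I would use the product/chain rule in $X_{1+k}$. The hypothesis $h\in C^1(I;X_{1+k})$ gives one derivative for free, namely $\frac{d}{ds}h(s)=C(s)h(s)$ in $X_{1+k}$. For the other factor I would appeal to the standard backward differentiation property of evolution families: for $x$ in a suitable regular subspace (into which $X_{2+k}$ must embed), one has $\partial_s U(t,s)x = -U(t,s)C(s)x$ in $X_{1+k}$. Applying this with $x=h(s)\in X_{2+k}$, which is legitimate because $h\in C(I;X_{2+k})$ and by Lemma~\ref{Lemma:GammaEstimate} (or its analogue for $F(g)=L+g\Gamma$) the operator $C(s)$ restricts to a bounded map $X_{2+k}\to X_{1+k}$, the two contributions cancel:
\[
\frac{d}{ds}w(s) \;=\; \bigl(\partial_s U(t,s)\bigr)h(s) + U(t,s)\,\tfrac{d}{ds}h(s) \;=\; -U(t,s)C(s)h(s) + U(t,s)C(s)h(s) \;=\; 0 .
\]
Continuity of $s\mapsto C(s)h(s)$ into $X_{1+k}$ (guaranteed by the joint continuity of $h$ in $X_{2+k}$ and the continuous dependence $t\mapsto C(t)\in\mathcal{L}(X_{2+k},X_{1+k})$ from Proposition~\ref{PazyNLCP2}) then lets us integrate this vanishing derivative to conclude $w(0)=w(t)$.

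The main obstacle is making the backward differentiation of $s\mapsto U(t,s)x$ rigorous on the invariant subspace where $h(s)$ lives. For general evolution families this can fail, but under the hypotheses used to construct $U$ in Lemma~\ref{lem:AgEvolutionFamily} via Proposition~\ref{PazyNLCP2}, the space $X_{2+k}$ plays the role of Pazy's ``$Y$'': it is an invariant, densely and continuously embedded Banach space on which the family generates a second evolution family and on which $C(\cdot)$ is a bounded, continuous family with values in $\mathcal{L}(X_{2+k},X_{1+k})$. Precisely this setup is what Pazy's Theorem 5.4.2 is built to handle, and it yields the differentiation identity we need together with the uniqueness conclusion. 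Thus the proof reduces to verifying that the present hypotheses match the hypotheses of that theorem, after which the identity $U(t,0)h(0)=h(t)$ is immediate.
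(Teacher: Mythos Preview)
Your proposal is correct: the paper itself gives no proof beyond stating that the proposition is a direct application of Theorem~5.4.2 in \cite{PazyBook}, and the argument you sketch (differentiating $s\mapsto U(t,s)h(s)$ and using the backward differentiation identity $\partial_s U(t,s)x=-U(t,s)C(s)x$ for $x$ in the regular subspace $X_{2+k}$) is exactly the standard proof of that theorem. The only thing to note is that your invocation of Proposition~\ref{PazyNLCP2} to justify the $X_{2+k}$ setup is appropriate for the application at hand but is not literally part of the hypotheses of the proposition as stated, so strictly speaking the proposition should be read as assuming the Pazy framework (which it does, by citation).
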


Now we finally give the proof of Theorem \ref{Lem:LyapunovStability}.

\begin{proof}[Proof of Theorem \ref{Lem:LyapunovStability}]
 Let $M_k$ be the uniform bound in the space $X_k$ given in Lemma \ref{EvolutionInXk}. Set 
 \[
 \delta = \frac{Q_1\min\{\delta_{k-2}, \delta_{k-1}, \delta_k, \delta_{k+1}, \delta_H,\e Q_1^{-1} \}}{2 M_k}.
 \]
 Now, let $(h_i)$ correspond to a solution of the Becker-D\"oring equations, with $\|h(0)\|_{X_{1+k}} < \delta$. By Lemma \ref{Lem:Duhamel} and as $k > 2$ we know that $h_1$ is $C^1$. By Lemma \ref{EvolutionInXk} we thus know that $\set{F(h_1(t))}_{t\in I}$ generates an evolution family $U$ on $X_{1+(k-2)}$ and $X_{1+k}$ on the (non-empty) interval $I$ such that $|h_1(t)| \leq \min\{\delta_{k-2}, \delta_{k-1}, \delta_k, \delta_{k+1}, \delta_H\}$. As $k > 2$, by Lemma \ref{Lem:Duhamel} we know that the conditions of Proposition \ref{Prop:ClassicalSol} are satisfied in $X_{1+(k-2)}$, and thus $U(t,0)h(0) = h(t)$ for all $t \in I$. 
 
The uniform bounds from Lemma \ref{EvolutionInXk} then imply that $\|h(t)\|_{X_{1+k}} \leq M_i \|h(0)\|_{X_{1+k}}$ on the interval $I$. Our choice of $\delta$ immediately implies that $I = [0,\infty)$ and that $\|h(t)\|_{X_k} \leq \e/2$, which completes the proof.
\end{proof}

\section*{Acknowledgments}
This material is based upon work supported by the National
Science Foundation under grants DMS 1211161 and DMS 1515400,
and also supported by the Center for Nonlinear Analysis (CNA)
under National Science Foundation PIRE Grant no.\ OISE-0967140.
RLP also acknowledges support from F.C.T. (Portugal) grant UTA CMU/MAT/0007/2009.

\bibliography{BDquasilinearRefs}

\begin{thebibliography}{10}

\bibitem{BallCarrConvergence}
J.~M. Ball and J.~Carr.
\newblock Asymptotic behaviour of solutions to the {B}ecker-{D}\"oring
  equations for arbitrary initial data.
\newblock {\em Proc. Roy. Soc. Edinburgh Sect. A}, 108(1-2):109--116, 1988.

\bibitem{BallCarrPenrose}
J.~M. Ball, J.~Carr, and O.~Penrose.
\newblock The {B}ecker-{D}\"oring cluster equations: basic properties and
  asymptotic behaviour of solutions.
\newblock {\em Comm. Math. Phys.}, 104(4):657--692, 1986.

\bibitem{BeckerDoring}
R.~Becker and W.~D{\"o}ring.
\newblock Kinetische behandlung der keimbildung in {\"u}bers{\"a}ttigten
  d{\"a}mpfen.
\newblock {\em Annalen der Physik}, 416(8):719--752, 1935.

\bibitem{BerghLofstrom}
J{\"o}ran Bergh and J{\"o}rgen L{\"o}fstr{\"o}m.
\newblock {\em {Interpolation Spaces: An Introduction}}, volume 223.
\newblock Springer, 1976.

\bibitem{Burton}
J.J. Burton.
\newblock Nucleation theory.
\newblock In Bruce~J. Berne, editor, {\em Statistical Mechanics}, volume~5 of
  {\em Modern Theoretical Chemistry}, pages 195--234. Springer US, 1977.

\bibitem{CanizoLods}
Jos{\'e}~A. Ca{\~n}izo and Bertrand Lods.
\newblock Exponential convergence to equilibrium for subcritical solutions of
  the {B}ecker-{D}\"oring equations.
\newblock {\em J. Differential Equations}, 255(5):905--950, 2013.

\bibitem{EngelNagel}
Klaus-Jochen Engel and Rainer Nagel.
\newblock {\em One-parameter semigroups for linear evolution equations}, volume
  194.
\newblock Springer Science \& Business Media, 2000.

\bibitem{Engler}
Hans Engler.
\newblock Asymptotic stability of traveling wave solutions for perturbations
  with algebraic decay.
\newblock {\em J. Differential Equations}, 185(1):348--369, 2002.

\bibitem{MouhotMischler}
M.~P. {Gualdani}, S.~{Mischler}, and C.~{Mouhot}.
\newblock {Factorization for non-symmetric operators and exponential
  H-theorem}.
\newblock arXiv:1006.5523.

\bibitem{HenryBook}
Daniel Henry.
\newblock {\em Geometric Theory of Semilinear Parabolic Equations}.
\newblock Lecture Notes in Mathematics. Springer Berlin Heidelberg, 1981.

\bibitem{JabinNiethammer}
Pierre-Emmanuel Jabin and Barbara Niethammer.
\newblock On the rate of convergence to equilibrium in the {B}ecker-{D}\"oring
  equations.
\newblock {\em J. Differential Equations}, 191(2):518--543, 2003.

\bibitem{KatoBook}
Tosio Kato.
\newblock {\em Perturbation theory for linear operators}.
\newblock Classics in Mathematics. Springer-Verlag, Berlin, 1995.
\newblock Reprint of the 1980 edition.

\bibitem{LaurencotMischlerUniqueness}
Philippe Lauren{\c{c}}ot and St{\'e}phane Mischler.
\newblock From the {B}ecker-{D}\"oring to the {L}ifshitz-{S}lyozov-{W}agner
  equations.
\newblock {\em J. Statist. Phys.}, 106(5-6):957--991, 2002.

\bibitem{PazyBook}
A.~Pazy.
\newblock {\em Semigroups of linear operators and applications to partial
  differential equations}, volume~44 of {\em Applied Mathematical Sciences}.
\newblock Springer-Verlag, New York, 1983.

\bibitem{PenroseLebowitz}
O~Penrose and Joel~L Lebowitz.
\newblock Towards a rigorous molecular theory of metastability.
\newblock {\em Fluctuation Phenomena}, 7:293--340, 1987.

\bibitem{Penrose89}
Oliver Penrose.
\newblock Metastable states for the {B}ecker-{D}{\"o}ring cluster equations.
\newblock {\em Communications in Mathematical Physics}, 124(4):515--541, 1989.

\bibitem{SlemrodEquilibrium}
M.~Slemrod.
\newblock Trend to equilibrium in the {B}ecker-{D}\"oring cluster equations.
\newblock {\em Nonlinearity}, 2(3):429--443, 1989.

\bibitem{Slemrod}
Marshall Slemrod.
\newblock The {B}ecker-{D}{\"o}ring equations.
\newblock In {\em Modeling in applied sciences}, pages 149--171. Springer,
  2000.

\end{thebibliography}
\bibliographystyle{siam} 

\end{document}